\documentclass[11pt,onecolumn,a4paper]{article}

\usepackage[left=2.2cm,right=2.2cm,top=2.4cm,bottom=2.4cm]{geometry}
\usepackage[affil-it]{authblk}
\usepackage[utf8]{inputenc}
\usepackage{amsmath,amssymb,amsfonts,amsthm}
\usepackage{mathtools}
\usepackage{bm}
\usepackage{bbm}
\usepackage{graphicx}
\usepackage{microtype}
\usepackage{hyperref}
\usepackage[nameinlink]{cleveref}
\usepackage{float}

\hypersetup{colorlinks=true, linkcolor=blue, citecolor=blue, urlcolor=cyan}
\newcommand{\R}{\mathbb{R}}
\renewcommand{\d}{\mathrm{d}}
\newcommand{\Hor}{\mathsf{H}}
\newcommand{\hg}{\mathsf{g}}
\newcommand{\hh}{\mathsf{h}}
\newcommand{\Tr}{\operatorname{Tr}}
\newcommand{\rank}{\operatorname{rank}}
\newcommand{\Ball}{\mathcal{B}_R}

\newtheorem{theorem}{Theorem}[section]
\newtheorem{proposition}[theorem]{Proposition}
\newtheorem{lemma}[theorem]{Lemma}
\newtheorem{corollary}[theorem]{Corollary}

\begin{document}

\title{Blind Spots of the Zwanziger Horizon Function}
\author{Daniel G. Tedesco}
\affil{\small ESEHL/PPGENT, Curitiba, PR, Brazil\\
\small \texttt{daniel.te@uninter.com}}
\date{}
\maketitle

\begin{abstract}
We examine the configurationwise relation between the first Gribov horizon, defined by loss of positivity of the Faddeev-Popov operator, and Zwanziger's horizon function, which probes the inverse operator through background-dependent sources. The analysis focuses on whether the spectral directions associated with the onset of the Gribov horizon are necessarily accessible to the sources entering the horizon function, including situations in which the critical subspace is degenerate. This question is studied for radial SU(2) hedgehog backgrounds in three and four Euclidean dimensions, where angular symmetry constrains the source sector while the radial profile controls the ordering of Faddeev-Popov thresholds. Variational estimates and finite-volume calculations are used to characterize the threshold structure for smooth radial profiles. The regular-gauge BPST background is treated separately because of domain issues associated with zero-energy behavior and the horizon source in the full-space setting. The discussion is restricted to configurationwise spectral properties and does not address the statistical weighting of these backgrounds in the Yang-Mills functional integral.
\end{abstract}

\section{Introduction}
\label{sec:question}

The Gribov ambiguity is a structural feature that distinguishes non-Abelian gauge fixing from its Abelian counterpart. In the Landau gauge, the local obstruction is encoded in the Faddeev-Popov operator,
\begin{equation}
\mathcal M^{ab}[A]= -\partial_\mu D_\mu^{ab}[A].
\label{eq:intro-fp}
\end{equation}
Zero modes of $\mathcal M$ signal the failure of the gauge condition to isolate a unique representative on a gauge orbit. Gribov therefore proposed restricting the functional measure to
\begin{equation}
\Omega=\{A\mid \partial\!\cdot\!A=0,\ \mathcal M[A]\geq0\},
\label{eq:intro-omega}
\end{equation}
whose boundary is the first Gribov horizon~\cite{Gribov1978}. Singer subsequently showed that the ambiguity is topological and cannot be removed by an alternative continuous local gauge choice~\cite{Singer1978}. The geometry associated with this restriction, including the fundamental
modular region, orbit bifurcations, and the relation between local and
absolute minima, has been studied in detail
~\cite{SemenovTyanShanskiiFranke1982,DellAntonioZwanziger1991,
VanBaal1992,Cucchieri1998,VandersickelZwanziger2012}.
Complementary orbit-space formulations characterize the geometry directly
through the metric on gauge-equivalence classes and, in explicit families,
through curves whose intersections with the Gribov horizon can be determined
analytically~\cite{Orland1996,OrlandSemenoff2000}.
Explicit topologically trivial copies inside the first Gribov region,
including spherically symmetric constructions, were subsequently revisited
in~\cite{LandimLemes2014}.

Zwanziger implemented the restriction through a nonlocal horizon functional and its local renormalizable formulation~\cite{Zwanziger1989,ZwanzigerLocal1989,Zwanziger1993}. Whenever the inverse Faddeev-Popov operator is defined on the relevant source vectors, the nonlocal quantity is
\begin{equation}
\Hor(A)=g^2\!\int\!\!\int f^{abc}A^b_\mu(x)
\bigl(\mathcal M^{-1}\bigr)^{ad}(x,y)
f^{dec}A^e_\mu(y).
\label{eq:horizon-function}
\end{equation}
Related horizon-function constructions have also been developed beyond
Landau gauge, notably in the maximal Abelian gauge, where localizability,
renormalizability, and the geometry of the corresponding Gribov region
provide additional constraints on the admissible nonlocal horizon term
~\cite{CapriLemesMAG2006,CapriLemesMAG2010}.
Gribov's no-pole condition and Zwanziger's horizon condition are functionally equivalent~\cite{CapriNoPole2013}. Their incorporation into Yang-Mills dynamics leads to infrared constructions that include the refined Gribov-Zwanziger framework, with dimension-two condensates and decoupling propagators, as well as functional approaches organized by infrared boundary conditions~\cite{Dudal2008,FischerMaasPawlowski2009}. Other prescriptions average over Gribov copies in order to study their effect on Green functions~\cite{SerreauTissier2012,Sternbeck2006,SternbeckMuller2013}.

The spectral location of the first Gribov horizon and the response of a source-sandwiched inverse need not contain the same information. A vanishing eigenvalue identifies the spectral channel in which positivity is lost, but an observable involving $\mathcal M^{-1}$ is singular only to the extent that its source has a component in the corresponding critical subspace. The reorganization of the low-lying Faddeev-Popov spectrum as the first
Gribov horizon is approached has itself been studied directly in both
Landau and Coulomb gauges~\cite{Greensite2010}. The same distinction appears in lattice bounds for the ghost propagator, where the lowest nonzero Faddeev-Popov eigenvalue enters together with the overlap of its eigenvector with the external momentum source~\cite{CucchieriMendes2008,CucchieriMendes2013,CucchieriMendesCrossing2013}. Spectral proximity to the boundary and source sensitivity to the critical direction are therefore independent data.

Exact critical modes provide a natural setting in which to separate these two questions. Henyey's construction of nontrivial Landau-gauge copies from radial profiles~\cite{Henyey1979}, subsequent finite-norm $SU(2)$ zero-mode constructions~\cite{GuimaraesSorella2011,CapriZeroModes2012}, and the regular-gauge BPST instanton~\cite{Maas2006,BPST1975,tHooft1976} give explicit backgrounds at which the Faddeev-Popov spectrum can be analyzed. The existence of a zero mode by itself, however, does not determine its overlap with the background-dependent source appearing in $\Hor(A)$.

We ask whether the first critical eigenspace encountered along a prescribed family of Landau-gauge backgrounds can be orthogonal to that source. Because a crossing may be degenerate, the question is formulated at the level of the critical subspace. Building on the operator framework developed in~\cite{TedescoSpectral,TedescoBirmanSchwinger}, we introduce a source map $T_A$ and its finite-rank visibility operator $V_A=T_AT_A^*$, then classify an isolated critical multiplet through the compression $P_cV_AP_c$. For explicit $SU(2)$ hedgehog families, angular symmetry fixes the source sector while the radial profile orders the Faddeev-Popov thresholds. This makes it possible to construct smooth profiles whose first crossing is source-dark in three dimensions and on both amplitude branches in four dimensions. The regular-gauge BPST background then identifies the point at which the full-space $L^2$ source construction ceases to apply. These statements are configurationwise and concern the spectral relation between $\mathcal M[A]$ and its horizon source, without addressing the statistical weight of the configurations in the Yang-Mills functional integral.

\section{Source visibility and critical singularities}
\label{sec:visibility}

Because the horizon functional probes the inverse Faddeev-Popov operator through finitely many background-dependent source columns, the source-label space should be kept distinct from the ghost Hilbert space. Let $X$ denote either the Dirichlet ball $\Ball$ or $\R^d$, and set
\begin{equation}
\mathcal H_{\rm gh}=L^2\!\left(X,\mathfrak{su}(N)_{\mathbb C}\right).
\label{eq:ghost-hilbert-space}
\end{equation}
Fix boundary conditions and background regularity such that $\mathcal M=\mathcal M[A]$ has a positive self-adjoint realization on $\mathcal H_{\rm gh}$. The finitely many localizer labels $(\mu,d)$ span $\mathcal K\simeq\mathbb C^{d(N^2-1)}$. With respect to its orthonormal basis $e_{\mu d}$, define
\begin{equation}
\bigl(T_Ae_{\mu d}\bigr)^a(x)=g f^{a\ell d}A^\ell_\mu(x)=m^a_{\mu d}(x).
\label{eq:source-map}
\end{equation}
The source map $T_A$ takes values in the ghost Hilbert space when each column satisfies
\begin{equation}
m_{\mu d}\in\mathcal H_{\rm gh}
\qquad\text{for every }(\mu,d).
\label{eq:source-admissibility}
\end{equation}
Since $\mathcal K$ is finite dimensional, \eqref{eq:source-admissibility} makes $T_A$ bounded and Hilbert-Schmidt. The positive operator
\begin{equation}
V_A=T_AT_A^*\geq0
\label{eq:visibility-definition}
\end{equation}
will be called the \emph{visibility operator}. Its range is the source-accessible subspace of the ghost Hilbert space. Consequently, $V_A$ is finite rank and trace class, with
\begin{equation}
\rank V_A\leq\dim\mathcal K,
\qquad
\Tr_{\mathcal H_{\rm gh}}V_A
=\|T_A\|_{\mathrm{HS}}^2
=\sum_{\mu,d}\|m_{\mu d}\|_{\mathcal H_{\rm gh}}^2.
\label{eq:visibility-trace-class}
\end{equation}
In particular, $A\in L^2(X)$ is sufficient for source admissibility.

For $\epsilon>0$, define the regularized source-sandwiched inverse form
\begin{equation}
\Hor_\epsilon(A)
=\Tr_{\mathcal K}\!\left[T_A^*(\mathcal M+\epsilon)^{-1}T_A\right]
=\left\|(\mathcal M+\epsilon)^{-1/2}T_A\right\|_{\mathrm{HS}}^2.
\label{eq:regularized-horizon}
\end{equation}
For every $\epsilon>0$ the resolvent is bounded, and \eqref{eq:regularized-horizon} is therefore finite under \eqref{eq:source-admissibility}. The limit $\epsilon\downarrow0$ is controlled by the spectral measure $E_{\mathcal M}$, for which the domain of the inverse square root is
\begin{equation}
\mathcal D(\mathcal M^{-1/2})
=
\left\{
\psi\in(\ker\mathcal M)^\perp
\,\middle|\,
\int_{(0,\infty)}\frac{1}{\lambda}\,
\d\langle\psi,E_{\mathcal M}(\lambda)\psi\rangle<\infty
\right\}.
\label{eq:inverse-form-domain}
\end{equation}

\begin{proposition}[Source-sandwiched trace identity]
\label{prop:trace-identity}
Assume that $\mathcal M\geq0$ is self-adjoint and that \eqref{eq:source-admissibility} holds. The limit
\begin{equation}
\Hor(A)=\lim_{\epsilon\downarrow0}\Hor_\epsilon(A)
\label{eq:horizon-form-limit}
\end{equation}
exists in $[0,\infty]$. It is finite if and only if
\begin{equation}
\operatorname{ran}T_A\subset\mathcal D(\mathcal M^{-1/2}).
\label{eq:inverse-form-admissibility}
\end{equation}
Under this condition, $B_A=\mathcal M^{-1/2}T_A$ is Hilbert-Schmidt and
\begin{equation}
\Hor(A)
=\Tr_{\mathcal K}(B_A^*B_A)
=\Tr_{\mathcal H_{\rm gh}}(B_AB_A^*)
=\Tr_{\mathcal H_{\rm gh}}\!\left(\mathcal M^{-1/2}V_A\mathcal M^{-1/2}\right),
\label{eq:sandwiched-trace-form}
\end{equation}
where the last operator is defined by the factorization $B_AB_A^*$. If there exists $\lambda_A>0$ such that $\mathcal M\geq\lambda_A I$, then $\mathcal M^{-1}$ is bounded and
\begin{equation}
{
\Hor(A)=\Tr_{\mathcal K}\!\left(T_A^*\mathcal M^{-1}T_A\right)
=\Tr_{\mathcal H_{\rm gh}}\!\left(\mathcal M^{-1}V_A\right).
}
\label{eq:trace-form}
\end{equation}
\end{proposition}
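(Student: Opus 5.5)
The plan is to reduce everything to scalar spectral integrals, one for each source column, and then apply monotone convergence. Write $m_j=T_Ae_j$ for the finitely many columns, $j=(\mu,d)$. By the definition \eqref{eq:regularized-horizon},
\begin{equation*}
\Hor_\epsilon(A)=\sum_j\langle m_j,(\mathcal M+\epsilon)^{-1}m_j\rangle
=\sum_j\int_{[0,\infty)}\frac{1}{\lambda+\epsilon}\,\d\langle m_j,E_{\mathcal M}(\lambda)m_j\rangle .
\end{equation*}
Each integrand is nonnegative and increases pointwise as $\epsilon\downarrow0$. The limit integrand is $1/\lambda$ on $(0,\infty)$ and $+\infty$ at $\lambda=0$. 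Monotone convergence then gives existence of the limit in $[0,\infty]$, with
\begin{equation*}
\Hor(A)=\sum_j\Big(\infty\cdot\|P_{\ker\mathcal M}m_j\|^2+\int_{(0,\infty)}\lambda^{-1}\,\d\langle m_j,E_{\mathcal M}(\lambda)m_j\rangle\Big),
\end{equation*}
using the convention $\infty\cdot0=0$.

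Finiteness is equivalent to every summand being finite. By \eqref{eq:inverse-form-domain}, this says exactly that each $m_j$ lies in $\mathcal D(\mathcal M^{-1/2})$. The range of $T_A$ is the span of the $m_j$, and $\mathcal D(\mathcal M^{-1/2})$ is a linear subspace, so this is equivalent to \eqref{eq:inverse-form-admissibility}. Here $\mathcal M^{-1/2}$ means the functional-calculus operator $\lambda^{-1/2}$ on $(\ker\mathcal M)^\perp$.

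Under \eqref{eq:inverse-form-admissibility}, the operator $B_A=\mathcal M^{-1/2}T_A$ is defined on all of the finite-dimensional space $\mathcal K$, so it has finite rank and is Hilbert--Schmidt. By functional calculus, $\|\mathcal M^{-1/2}m_j\|^2=\int_{(0,\infty)}\lambda^{-1}\,\d\langle m_j,E_{\mathcal M}(\lambda)m_j\rangle$. Summing over $j$ gives $\Hor(A)=\|B_A\|_{\rm HS}^2=\Tr_{\mathcal K}(B_A^*B_A)$. The identity $\Tr_{\mathcal K}(B_A^*B_A)=\Tr_{\mathcal H_{\rm gh}}(B_AB_A^*)$ is the standard equality $\|B\|_{\rm HS}=\|B^*\|_{\rm HS}$, which is unproblematic here because the rank is finite.

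The expression $\mathcal M^{-1/2}V_A\mathcal M^{-1/2}$ is read as $B_AB_A^*$, as the statement allows. It is worth noting the heuristic: $B_A^*$ extends $T_A^*\mathcal M^{-1/2}$ from $\mathcal D(\mathcal M^{-1/2})$. To check this, for $\psi\in\mathcal D(\mathcal M^{-1/2})$ and $k\in\mathcal K$, use the self-adjointness of $\mathcal M^{-1/2}$ on its domain to get
\begin{equation*}
\langle B_Ak,\psi\rangle=\langle T_Ak,\mathcal M^{-1/2}\psi\rangle=\langle k,T_A^*\mathcal M^{-1/2}\psi\rangle .
\end{equation*}
This yields the third equality in \eqref{eq:sandwiched-trace-form}.

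For the bounded case $\mathcal M\ge\lambda_AI$, the kernel is trivial and the spectrum lies in $[\lambda_A,\infty)$. Hence $\mathcal M^{-1}$ and $\mathcal M^{-1/2}$ are bounded and everywhere defined, and \eqref{eq:inverse-form-admissibility} holds automatically. Then $B_A^*B_A=T_A^*\mathcal M^{-1}T_A$. Cyclicity of the trace, valid because $T_A$ has finite rank and $\mathcal M^{-1}$ is bounded, gives $\Tr_{\mathcal H_{\rm gh}}(T_A\,T_A^*\mathcal M^{-1})=\Tr_{\mathcal H_{\rm gh}}(\mathcal M^{-1}V_A)$.

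The only delicate point I expect is the bookkeeping at $\lambda=0$. The monotone limit must register a kernel component of a source as $+\infty$, and this has to match the requirement $\psi\in(\ker\mathcal M)^\perp$ built into \eqref{eq:inverse-form-domain}. I would handle it by splitting the spectral measure explicitly into its atom at $0$ and its part on $(0,\infty)$, as in the limit formula above. The rest is routine functional calculus.
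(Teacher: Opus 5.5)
Your proposal is correct and follows essentially the same route as the paper: a columnwise spectral integral for $\|(\mathcal M+\epsilon)^{-1/2}m_{\mu d}\|^2$, monotone convergence as $\epsilon\downarrow0$, identification of finiteness with every column lying in $\mathcal D(\mathcal M^{-1/2})$, and cyclicity of the trace in the bounded case. Your explicit splitting of the spectral measure into its atom at $0$ and its part on $(0,\infty)$ spells out a step the paper leaves implicit, but the argument is otherwise the same.
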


\begin{proof}
Put $B_{A,\epsilon}=(\mathcal M+\epsilon)^{-1/2}T_A$. Boundedness of the regularized inverse and finite dimensionality of $\mathcal K$ make $B_{A,\epsilon}$ Hilbert-Schmidt. Its two natural products are
\begin{equation}
B_{A,\epsilon}^*B_{A,\epsilon}
=T_A^*(\mathcal M+\epsilon)^{-1}T_A,
\qquad
B_{A,\epsilon}B_{A,\epsilon}^*
=(\mathcal M+\epsilon)^{-1/2}V_A(\mathcal M+\epsilon)^{-1/2}
\end{equation}
Both products are trace class and have the same trace, $\|B_{A,\epsilon}\|_{\mathrm{HS}}^2$. Applying the spectral theorem to one source column gives
\begin{equation}
\left\|(\mathcal M+\epsilon)^{-1/2}m_{\mu d}\right\|^2
=
\int_{[0,\infty)}\frac{1}{\lambda+\epsilon}\,
\d\langle m_{\mu d},E_{\mathcal M}(\lambda)m_{\mu d}\rangle.
\label{eq:spectral-source-integral}
\end{equation}
As $\epsilon$ decreases, $(\lambda+\epsilon)^{-1}$ increases pointwise. Monotone convergence therefore yields \eqref{eq:horizon-form-limit}, and the limit is finite exactly when every source column belongs to $\mathcal D(\mathcal M^{-1/2})$, which is \eqref{eq:inverse-form-admissibility}. Because $\mathcal K$ is finite dimensional, $B_A=\mathcal M^{-1/2}T_A$ is then Hilbert-Schmidt, proving \eqref{eq:sandwiched-trace-form}. If $\mathcal M\geq\lambda_A I$, both inverse powers are bounded and cyclicity of the trace reduces the expression to \eqref{eq:trace-form}.
\end{proof}

On a Dirichlet ball, smooth backgrounds automatically satisfy \eqref{eq:source-admissibility}. Inside the positive region the lowest Dirichlet eigenvalue is strictly positive, so \eqref{eq:trace-form} is an ordinary trace identity. At a boundary configuration $A_c$ where zero is an isolated eigenvalue and the complementary spectrum remains separated from zero,
\begin{equation}
\mathcal D\!\left(\mathcal M[A_c]^{-1/2}\right)=(\ker\mathcal M[A_c])^\perp.
\end{equation}
The domain relation shows that the horizon form stays finite at $A_c$ exactly when the source has no component in the kernel, $P_cT_{A_c}=0$, or equivalently $P_cV_{A_c}P_c=0$. The regularized pole is
\begin{equation}
\Hor_\epsilon(A_c)=\frac{1}{\epsilon}\Tr(P_cV_{A_c})+O(1).
\label{eq:kernel-pole}
\end{equation}
This kernel criterion is specific to an isolated zero. In full space, continuous spectrum can extend to the origin, and orthogonality to $\ker\mathcal M$ must be supplemented by $\lambda^{-1}$-integrability of the source spectral measure, exactly as required by \eqref{eq:inverse-form-admissibility}.

Let $A(\tau)$, $\tau\in(0,\tau_0]$, be a family in the Landau slice with $\mathcal M(\tau)$ self-adjoint, with lowest eigenvalues $\lambda_1(\tau),\dots,\lambda_r(\tau)\to0^+$ as $\tau\to0^+$, with a uniform gap $\inf_{\tau\in(0,\tau_0]}\lambda_{r+1}(\tau)=\lambda_\star>0$, and with $A(\tau)\to A_c$ in a norm for which $V_{A(\tau)}\to V_{A_c}$ in trace norm. Assume in addition that the convergence of the operators is in norm-resolvent sense,
\begin{equation}
\bigl\|(\mathcal M(\tau)+I)^{-1}-(\mathcal M(0)+I)^{-1}\bigr\|\longrightarrow0
\qquad(\tau\downarrow0),
\label{eq:norm-resolvent}
\end{equation}
which guarantees norm convergence of the spectral projectors associated with the isolated critical cluster. Let $P_c$ be the spectral projector of $\mathcal M(0)=\mathcal M[A_c]$ at $0$, of rank $r$, and write
\begin{equation}
V_c\equiv P_cV_{A_c}P_c .
\label{eq:critical-block}
\end{equation}

\begin{proposition}[Visibility criterion and classification]
\label{prop:visibility}
Write $P_c(\tau)$ for the spectral projector of $\mathcal M(\tau)$ onto $\lambda_1(\tau),\dots,\lambda_r(\tau)$, so that $P_c(\tau)\to P_c$ in norm. Under the hypotheses above:
\begin{enumerate}
\item[(a)] \emph{(divergence)} If $V_c\neq0$, equivalently $T_{A_c}^{*}P_c\neq0$, then
\begin{equation}
\Hor\bigl(A(\tau)\bigr)\longrightarrow\infty ,
\label{eq:criterion}
\end{equation}
and for a simple crossing, $r=1$ with normalized critical vector $\psi_c$,
\begin{equation}
\lambda_1(\tau)\,\Hor\bigl(A(\tau)\bigr)\longrightarrow
\bigl\langle\psi_c,V_{A_c}\psi_c\bigr\rangle
=\sum_{\mu,d}\bigl|\langle\psi_c,m_{\mu d}\rangle\bigr|^2>0 .
\label{eq:residue}
\end{equation}
If in addition the quadratic forms of $\mathcal M(\tau)$ share a common form domain and are differentiable at $\tau=0$ in the norm of bounded forms on it, with $\Gamma=P_c\dot{\mathcal M}(0)P_c$ positive definite on $\operatorname{ran}P_c$, then $\tau\Hor\to\Tr(\Gamma^{-1}V_c)$.
\item[(b)] \emph{(boundedness)} If $P_c(\tau)\,V_{A(\tau)}\,P_c(\tau)=0$ for every $\tau\in(0,\tau_0]$, we call the approach \emph{uniformly dark}. The singular part of $\Hor$ then vanishes identically and
\begin{equation}
\Hor\bigl(A(\tau)\bigr)\;\leq\;\frac{\Tr V_{A(\tau)}}{\lambda_\star}
\qquad\text{for all small }\tau .
\label{eq:bounded}
\end{equation}
\end{enumerate}
Crossings are classified by
\begin{equation}
\rank V_c=0:\ \text{dark};
\qquad
0<\rank V_c<r:\ \text{partially visible};
\qquad
\rank V_c=r:\ \text{fully visible} .
\label{eq:classification}
\end{equation}
\end{proposition}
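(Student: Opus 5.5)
The plan is to split the horizon form along the critical cluster using the spectral projector $P_c(\tau)$, and to treat the two pieces with Proposition~\ref{prop:trace-identity}. For each $\tau\in(0,\tau_0]$ we have $\lambda_1(\tau)>0$, so $\mathcal M(\tau)\geq\lambda_1(\tau)I$ and \eqref{eq:trace-form} applies. Since $P_c(\tau)$ commutes with $\mathcal M(\tau)$, we get
\begin{equation*}
\Hor(A(\tau))=\Tr\bigl(\mathcal M(\tau)^{-1}P_c(\tau)V_{A(\tau)}P_c(\tau)\bigr)+\Tr\bigl(\mathcal M(\tau)^{-1}Q_c(\tau)V_{A(\tau)}Q_c(\tau)\bigr),
\end{equation*}
with $Q_c(\tau)=I-P_c(\tau)$. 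The cross terms drop out under the trace because $\mathcal M^{-1}$ is block diagonal. The regular part is bounded by $\lambda_\star^{-1}\Tr V_{A(\tau)}$, since $\mathcal M(\tau)^{-1}Q_c(\tau)\leq\lambda_\star^{-1}$ and $\Tr(Q_cVQ_c)\leq\Tr V$. Trace-norm convergence $V_{A(\tau)}\to V_{A_c}$ keeps this bounded uniformly in $\tau$.

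\emph{Part (b).} In the uniformly dark case the singular part vanishes identically, and \eqref{eq:bounded} is exactly the bound just obtained.

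\emph{Part (a).} Write the singular part in an eigenbasis $\psi_j(\tau)$ of $\mathcal M(\tau)$ restricted to $\operatorname{ran}P_c(\tau)$. It equals
\begin{equation*}
\sum_{j=1}^r\lambda_j(\tau)^{-1}\langle\psi_j(\tau),V_{A(\tau)}\psi_j(\tau)\rangle\geq\lambda_r(\tau)^{-1}\Tr\bigl(P_c(\tau)V_{A(\tau)}P_c(\tau)\bigr).
\end{equation*}
The projectors $P_c(\tau)$ converge in norm, which follows from \eqref{eq:norm-resolvent} via the Riesz integral around a small circle separating the cluster from $\lambda_\star$. Together with the trace-norm convergence of $V$, this gives $\Tr(P_c(\tau)V_{A(\tau)}P_c(\tau))\to\Tr V_c>0$. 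Since $\lambda_r(\tau)\to0$, divergence follows.

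For $r=1$, multiply the decomposition by $\lambda_1(\tau)$. The regular part is then killed by the uniform bound, and the singular part becomes $\langle\psi_1(\tau),V\psi_1(\tau)\rangle$. This converges to $\langle\psi_c,V_{A_c}\psi_c\rangle$ because the rank-one projectors converge in norm, so the eigenvector phase is irrelevant. The sum-of-overlaps formula comes from expanding $T_AT_A^*$ in the basis $e_{\mu d}$. Positivity follows from $V_c\neq0$.

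\emph{The $\tau\Hor$ limit.} For this refinement I would use analytic perturbation in the Kato sense on the cluster. Transport $\operatorname{ran}P_c(\tau)$ to $\operatorname{ran}P_c$ with the Kato transformation operator $U(\tau)$; this is unitary, tends to $I$ in norm, and intertwines the projectors. The compressed operator $U(\tau)^*P_c(\tau)\mathcal M(\tau)P_c(\tau)U(\tau)$ is an $r\times r$ matrix $M_r(\tau)$. The differentiability hypothesis on the forms, together with $P_c\mathcal M(0)=0$, should give
\begin{equation*}
M_r(\tau)=\tau\Gamma+o(\tau).
\end{equation*}
Then $\tau M_r(\tau)^{-1}\to\Gamma^{-1}$, because $\Gamma$ is positive definite. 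Also $U^*P_c(\tau)V_{A(\tau)}P_c(\tau)U\to V_c$. Hence $\tau$ times the singular part tends to $\Tr(\Gamma^{-1}V_c)$, while $\tau$ times the regular part tends to $0$.

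\emph{Classification and main obstacle.} The classification \eqref{eq:classification} is a definition and needs no proof beyond $\rank V_c\leq r$. The main obstacle is establishing the first-order expansion of $M_r(\tau)$. Differentiability holds only in the sense of forms, so controlling $P_c(\tau)-P_c=O(\tau)$ requires a resolvent estimate. I would first derive this from the form derivative via the second resolvent identity on the common form domain. The aim is to get $\|(\mathcal M(\tau)-z)^{-1}-(\mathcal M(0)-z)^{-1}\|=O(\tau)$ uniformly on the Riesz contour, and then insert it into the compression.
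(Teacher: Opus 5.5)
Your proposal is correct and follows essentially the paper's route. The shared steps are: the split of $\Hor$ into the critical part $\Tr\bigl(\mathcal M(\tau)^{-1}P_c(\tau)V_{A(\tau)}P_c(\tau)\bigr)$ plus a gap-controlled regular part; norm convergence of $P_c(\tau)$ together with trace-norm convergence of $V_{A(\tau)}$ for the divergence and the residue; and compression of $\mathcal M(\tau)$ to the critical subspace for the $\tau\Hor$ limit, which the paper phrases as a Schur complement and you handle with Kato's transformation function. The one thing to add is the factorization $V_c=(T_{A_c}^{*}P_c)^{*}(T_{A_c}^{*}P_c)$, which the paper uses to justify the ``equivalently $T_{A_c}^{*}P_c\neq0$'' clause in (a) and to show $\rank V_c=\rank(T_{A_c}^{*}P_c)$.
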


\begin{proof}
Separate the critical cluster from the spectrum above the uniform gap:
\begin{equation}
\Hor=\underbrace{\sum_{j\leq r}\frac{\langle\psi_j(\tau),V_{A(\tau)}\psi_j(\tau)\rangle}{\lambda_j(\tau)}}_{\textstyle \Hor_{\rm sing}}
\;+\;\underbrace{\sum_{j>r}\frac{\langle\psi_j(\tau),V_{A(\tau)}\psi_j(\tau)\rangle}{\lambda_j(\tau)}}_{\textstyle \Hor_{\rm reg}},
\label{eq:spectral-horizon}
\end{equation}
The gap gives $0\leq\Hor_{\rm reg}\leq\Tr V_{A(\tau)}/\lambda_\star$, uniformly for small $\tau$, while the critical contribution can be written as $\Hor_{\rm sing}=\Tr\bigl(\mathcal M(\tau)^{-1}P_c(\tau)V_{A(\tau)}P_c(\tau)\bigr)$.

(a) Norm-resolvent convergence~\eqref{eq:norm-resolvent}, together with the uniform gap that isolates the critical cluster, makes the spectral projector converge in norm, $P_c(\tau)\to P_c$~\cite{Kato1980}. Trace-norm convergence of the visibility operator then gives
\begin{equation}
\sum_{j\leq r}\langle\psi_j(\tau),V_{A(\tau)}\psi_j(\tau)\rangle
=\Tr\bigl(P_c(\tau)V_{A(\tau)}\bigr)
\longrightarrow\Tr V_c.
\end{equation}
If $V_c\neq0$, the limiting trace is positive. Hence at least one critical numerator remains bounded away from zero while all critical eigenvalues tend to zero, which forces $\Hor_{\rm sing}\to\infty$. For a simple crossing, multiplying \eqref{eq:spectral-horizon} by $\lambda_1$ suppresses the bounded regular term and yields the residue \eqref{eq:residue}. In the $C^1$ case the critical block of the resolvent is controlled by a Schur complement~\cite{Kato1980}. The reduced operator on $\operatorname{ran}P_c(\tau)$ is $\tau\Gamma+o(\tau)$ in norm, so that
\begin{equation}
\tau\,P_c(\tau)\,\mathcal M(\tau)^{-1}P_c(\tau)\;\longrightarrow\;P_c\Gamma^{-1}P_c ,
\label{eq:schur-limit}
\end{equation}
and since $\Hor_{\rm reg}$ is bounded, $\tau\Hor\to\Tr(\Gamma^{-1}V_c)=\Tr(\Gamma^{-1/2}V_c\Gamma^{-1/2})\geq0$, with equality iff $V_c=0$. Under $C^1$ regularity the remainder in~\eqref{eq:schur-limit} is only $o(\tau^{-1})$, since an $O(1)$ remainder would require second-order control that is neither needed nor assumed, and nothing below uses this limit. Residues are stated in the eigenvalue normalization so that their definition does not depend on the parametrization of the path. For a transversal crossing, $\lambda_1\sim\tau\dot\lambda_1(0)$, and normalization by $\lambda_1$ differs from normalization by $\tau$ only through the nonzero crossing slope. For a tangential approach, $\dot\lambda_1(0)=0$ and $\lambda_1\sim\tfrac12\ddot\lambda_1(0)\tau^2$, so the divergence expressed in $\tau$ changes while the eigenvalue-normalized residue in \eqref{eq:residue} remains unchanged.

(b) If $P_c(\tau)V_{A(\tau)}P_c(\tau)=0$ then $\Hor_{\rm sing}=0$ identically and only $\Hor_{\rm reg}$ survives, which is~\eqref{eq:bounded}.

The factorization $V_c=(T_{A_c}^{*}P_c)^{*}(T_{A_c}^{*}P_c)$ gives $V_c=0$ if and only if $T_{A_c}^{*}P_c=0$, and it also gives $\rank V_c=\rank(T_{A_c}^{*}P_c)$. This proves the rank classification \eqref{eq:classification}. If $0<\rank V_c<\rank P_c$, the critical subspace contains both source-visible and source-dark directions, and the singular contribution is supported on the image of $T_{A_c}^{*}P_c$. Since $\rank V_c\leq\dim\mathcal K$, a critical degeneracy larger than the source-label dimension cannot be fully visible.
\end{proof}

The divergent and bounded cases require different information. The endpoint condition $V_c=0$ does not by itself control the horizon function along an approaching family because the source projection may vanish more slowly than the critical eigenvalue. For example, take $\mathcal M(\tau)=\operatorname{diag}(\tau^2,1,1,\dots)$ and $V_{A(\tau)}=|v(\tau)\rangle\langle v(\tau)|$ with $v(\tau)=\sqrt\tau\,e_1+e_2$. Then $\lambda_1=\tau^2\to0$, the complementary gap is one, and $V_{A(\tau)}\to|e_2\rangle\langle e_2|$ in trace norm, so $V_c=0$. Nevertheless, $\Hor_{\rm sing}=\tau/\tau^2\to\infty$. The hedgehog rays studied below require no auxiliary rate estimate because $V_A$ annihilates the dark symmetry sector at every amplitude. Norm-resolvent convergence controls the critical projectors in the divergent case, while exact annihilation of the critical block along the path removes the singular contribution in the dark case.

For numerical applications it is useful to associate with a normalized low Faddeev-Popov mode $\psi_1$ the source visibility
\begin{equation}
\nu_1(A)=\langle\psi_1,V_A\psi_1\rangle
=\sum_{\mu,d}|\langle\psi_1,m_{\mu d}\rangle|^2.
\label{eq:visibility-diagnostic}
\end{equation}
Near a simple visible crossing, $\lambda_1\Hor\to\nu_1(A_c)$, whereas $\lambda_1\to0$ by itself does not determine the behavior of the horizon functional~\cite{Cucchieri1998,CucchieriMendes2008,CucchieriMendes2013}. In the symmetric families considered below, angular selection can set $\nu_1$ identically to zero in a critical sector.

\section{Three-dimensional hedgehog}
\label{sec:hedgehog}

Radial and spherically symmetric $SU(2)$ backgrounds provide explicit
Faddeev-Popov zero modes, Gribov copies, and horizon bifurcations
~\cite{Henyey1979,VanBaal1992,LandimLemes2014,
GuimaraesSorella2011,CapriZeroModes2012}.
Their rotational symmetry separates source support from radial threshold
ordering. On $\Ball\subset\R^3$, consider
\begin{equation}
A_i^c(x)=\varepsilon_{cij}x_j\,h(r),
\qquad r=|\bm x| ,
\label{eq:hedgehog}
\end{equation}
with $h$ a differentiable radial profile. Antisymmetry of $\varepsilon_{cij}$ makes the field transverse for every differentiable profile, since
\begin{equation}
\partial_iA_i^c=\varepsilon_{cij}\delta_{ij}h+\varepsilon_{cij}x_jx_i\frac{h'}{r}=0 .
\label{eq:transverse}
\end{equation}
Varying only the amplitude of $h$ therefore keeps the whole ray inside the Landau-gauge slice.

With $f^{abc}=\varepsilon^{abc}$ and $D^{ab}_\mu=\delta^{ab}\partial_\mu-g\varepsilon^{abc}A^c_\mu$, transversality removes the derivative of the background and
\begin{equation}
\mathcal M^{ab}=-\delta^{ab}\partial^2+g\,\varepsilon^{abc}A^c_i\partial_i
=-\delta^{ab}\partial^2+g\,h(r)\,\varepsilon^{abc}\varepsilon_{cij}x_j\partial_i .
\label{eq:M-raw}
\end{equation}
Contracting the two Levi-Civita tensors with $\varepsilon^{abc}\varepsilon_{ijc}=\delta_{ai}\delta_{bj}-\delta_{aj}\delta_{bi}$ rewrites the interaction term as $g\,h(r)(x_b\partial_a-x_a\partial_b)$. With the orbital and color-spin generators
\begin{equation}
L_i=-i\varepsilon_{ijk}x_j\partial_k,
\qquad
(S_i)_{ab}=-i\varepsilon_{iab},
\label{eq:generators}
\end{equation}
one has $\varepsilon_{iab}L_i=-i\varepsilon_{iab}\varepsilon_{ijk}x_j\partial_k=-i(x_a\partial_b-x_b\partial_a)$, hence $x_b\partial_a-x_a\partial_b=-i\varepsilon_{iab}L_i=(S_iL_i)_{ab}$. Writing $\hh(r)\equiv g\,h(r)$, which is the only combination in which the coupling and the profile enter,
\begin{equation}
\mathcal M=-\partial^2+\hh(r)\,\bm S\!\cdot\!\bm L .
\label{eq:normal-form}
\end{equation}
The normal form \eqref{eq:normal-form} commutes with $\bm J=\bm L+\bm S$, with $\bm L^2$, and with the radial operator. Each Faddeev-Popov channel can therefore be labelled by fixed $(J,L)$. For the adjoint color representation, $S=1$, and the operator $\bm S\!\cdot\!\bm L$ has the channel eigenvalue
\begin{equation}
\bm S\!\cdot\!\bm L\big|_{(J,L)}=c_{JL}=\tfrac12\bigl[J(J+1)-L(L+1)-2\bigr],
\qquad
c_{L-1,L}=-(L+1),\quad c_{L,L}=-1,\quad c_{L+1,L}=L ,
\label{eq:cJL}
\end{equation}
with multiplicity $2J+1$ and parity $(-1)^L$. For $L=0$ only $J=1$ occurs and $c_{10}=0$, so this channel is free and cannot generate a threshold. In terms of $u(r)=r\,f(r)$ for $\psi^a=f(r)Y^a_{JLM}(\hat x)$, the reduced radial problem is
\begin{equation}
-u''+\frac{L(L+1)}{r^2}u+c_{JL}\,\hh(r)\,u=\lambda u ,
\qquad u(0)=u(R)=0 .
\label{eq:radial}
\end{equation}
Since $\bm L^2$ is conserved, the critical subspaces can be organized according to a definite orbital angular momentum, which allows the selection analysis to be carried out directly from the normal form~\eqref{eq:normal-form} and the reduced radial equation~\eqref{eq:radial}, without invoking the complete threshold structure of this family worked out in~\cite{TedescoSpectral}.

\subsection{Source support and angular weights}

The source support can be read directly from \eqref{eq:source-map}. With $\varepsilon^{a\ell c}\varepsilon_{\ell ij}=-(\delta_{ai}\delta_{cj}-\delta_{aj}\delta_{ci})$,
\begin{equation}
\bigl(T_Ae_{ic}\bigr)^a(x)=g\,\varepsilon^{a\ell c}\varepsilon_{\ell ij}x_j\,h(r)
=-\hh(r)\bigl(\delta_{ai}x_c-\delta_{ci}x_a\bigr)
=-r\,\hh(r)\,M[i,c]^a{}_k\,\hat x_k ,
\label{eq:hedgehog-source}
\end{equation}
where $M[i,c]$ is the nine-member family of $3\times3$ matrices
\begin{equation}
M[i,c]^a{}_k=\delta_{ai}\delta_{ck}-\delta_{ci}\delta_{ak} .
\label{eq:M-matrices}
\end{equation}
Each source column is a radial function times a Cartesian component of $\hat x$, so $\operatorname{ran}T_A$ lies entirely in the orbital sector $L=1$ with radial profile $r\,\hh(r)$. Parity would only exclude even $L$. Conservation of $\bm L^2$ separates $L=1$ from every higher orbital sector, including the odd ones.

\begin{lemma}[Visibility kernel and its angular spectrum]
\label{lem:kernel}
On the hedgehog~\eqref{eq:hedgehog} the visibility operator has the closed-form kernel
\begin{equation}
V_A^{ab}(x,y)=\hh(r)\hh(r')\bigl[\delta^{ab}\,(x\!\cdot\!y)+x^ay^b\bigr],
\qquad r=|x|,\ r'=|y| ,
\label{eq:visibility-kernel}
\end{equation}
so $V_A$ is supported on $L=1$ and, for any profile with $\hh\not\equiv0$, has rank $9$. Under simultaneous rotation of the color and orbital indices the nine matrices~\eqref{eq:M-matrices} decompose into trace, antisymmetric and symmetric-traceless parts, carrying $J=0,1,2$ with total weights
\begin{equation}
\sum_{i,c}\bigl\|P_J\,M[i,c]\bigr\|_F^2=w_J,
\qquad w_0=4,\quad w_1=3,\quad w_2=5 ,
\label{eq:weights}
\end{equation}
all nonzero, hence per state $\kappa_J\equiv w_J/(2J+1)=4,1,1$. Equivalently, $\mathcal K_{\rm ang}\equiv\sum_{i,c}|M[i,c]\rangle\langle M[i,c]|$ acts on $3\times3$ matrices as $4$ on the trace part and as $1$ on the eight traceless directions.
\end{lemma}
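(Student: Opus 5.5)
The plan is to compute everything directly from the source columns \eqref{eq:hedgehog-source}. First I will use $V_A=T_AT_A^*=\sum_{i,c}|m_{ic}\rangle\langle m_{ic}|$. Since the $m_{ic}$ are real, the kernel is $V_A^{ab}(x,y)=\sum_{i,c}m^a_{ic}(x)\,m^b_{ic}(y)$. Inserting $m^a_{ic}(x)=-\hh(r)(\delta_{ai}x_c-\delta_{ci}x_a)$ and expanding the product gives four terms:
\begin{equation*}
\sum_{i,c}(\delta_{ai}x_c-\delta_{ci}x_a)(\delta_{bi}y_c-\delta_{ci}y_b)
=\delta_{ab}\,x\!\cdot\!y-x_ay_b-x_ay_b+3x_ay_b .
\end{equation*}
The middle two terms each use one Kronecker contraction, and the last uses $\sum_{i,c}\delta_{ci}\delta_{ci}=3$. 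The sum is $\delta_{ab}\,x\!\cdot\!y+x^ay^b$, which is \eqref{eq:visibility-kernel}. Support on $L=1$ then follows as already noted after \eqref{eq:M-matrices}: each column is the radial factor $r\hh(r)$ times the linear function $M[i,c]\hat x$ of $\hat x$.

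For the rank, I will write $V_A=T_AT_A^*$, so that $\rank V_A=\rank T_A=\dim\operatorname{span}\{m_{ic}\}$. The radial factor $r\hh(r)$ is a fixed nonzero function, and the map $M\mapsto(a\mapsto M_{ak}\hat x_k)$ from $3\times3$ matrices to vector fields on the sphere is injective. The reason is that the components of $\hat x$ are linearly independent functions. The rank therefore equals $\dim\operatorname{span}\{M[i,c]\}$. Writing $M[i,c]=E_{ic}-\delta_{ci}I$, I get:
\begin{itemize}
\item for $i\neq c$, the six off-diagonal units $E_{ic}$;
\item for $i=c$, the three matrices $E_{ii}-I$, whose sum is $-2I$.
\end{itemize}
Hence $I$ lies in the span, so every $E_{ii}$ does too, and the span is all nine matrices. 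This gives rank $9$ whenever $\hh\not\equiv0$.

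For the angular spectrum, I will compute $\mathcal K_{\rm ang}$ directly on an arbitrary $X$ with the Frobenius inner product. Since $\langle M[i,c],X\rangle_F=X_{ic}-\delta_{ci}\Tr X$,
\begin{equation*}
\mathcal K_{\rm ang}X=\sum_{i,c}(X_{ic}-\delta_{ci}\Tr X)(E_{ic}-\delta_{ci}I)=X-2(\Tr X)I+3(\Tr X)I=X+(\Tr X)\,I .
\end{equation*}
It follows that $\mathcal K_{\rm ang}$ acts as $4$ on $I$ and as $1$ on traceless matrices. I then identify the $J=0,1,2$ subspaces with the trace, antisymmetric and symmetric-traceless parts, of dimensions $1,3,5$. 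This uses the standard decomposition of $1\otimes1$ under simultaneous rotation of the color index $a$ and the orbital index $k$, which is how $\bm J=\bm L+\bm S$ acts on $M^a{}_k\hat x_k$. Since $\sum_{i,c}\|P_JM[i,c]\|_F^2=\Tr(P_J\mathcal K_{\rm ang})$, this yields $w_0=4$, $w_1=3$, $w_2=5$, and dividing by $2J+1$ gives $\kappa_J=4,1,1$.

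The only step requiring care is this last identification. I must check that the rotation acting on $M[i,c]\hat x$ is conjugation $M\mapsto RMR^{T}$, so that the trace, antisymmetric and symmetric-traceless pieces are exactly the $J=0,1,2$ eigenspaces within $L=1$. I also must check that the Frobenius norm on matrices corresponds, up to the common radial and angular normalization $\int\hat x_k\hat x_l\,d\Omega=\tfrac{4\pi}{3}\delta_{kl}$, to the $L^2$ norm of the columns. That constant is uniform across $J$, so it does not affect the stated weights.
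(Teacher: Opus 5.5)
Your proposal is correct and follows essentially the same route as the paper: you expand the source columns to obtain the kernel, compute $\mathcal K_{\rm ang}X=X+(\Tr X)\mathbbm 1$, and read off the weights from the trace, antisymmetric and symmetric-traceless decomposition of $\mathrm{Mat}_3$ under conjugation. The one difference is the rank step: you exhibit the $M[i,c]$ as a spanning set, while the paper infers rank $9$ from the nonzero eigenvalues $4$ and $1$ of $\mathcal K_{\rm ang}$, which is the same fact; your checks of the conjugation action and the $4\pi/3$ Frobenius normalization fill in details the paper leaves implicit.
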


\begin{proof}
Using \eqref{eq:hedgehog-source} and summing over the source labels gives
\begin{equation}
\begin{aligned}
V_A^{ab}(x,y)
&=
\sum_{i,c}
\bigl(T_Ae_{ic}\bigr)^a(x)
\bigl(T_Ae_{ic}\bigr)^b(y) \\
&=
\hh(r)\hh(r')
\bigl[
\delta^{ab}(x\!\cdot\! y)+x^a y^b
\bigr].
\end{aligned}
\label{eq:kernel-contraction}
\end{equation}
Thus $\operatorname{ran}V_A$ is contained in the space spanned by functions of the form
$\hh(r)x_k$ with a free color index. Since each $x_k=r\hat x_k$ carries orbital angular momentum $L=1$, the visibility operator has support only in the $L=1$ sector.

To determine its angular spectrum, identify the degree-one color-angular tensors with the space of $3\times3$ matrices equipped with the Frobenius inner product. The matrices $M[i,c]$ defined in \eqref{eq:M-matrices} determine the angular operator
\begin{equation}
\mathcal K_{\rm ang}
=
\sum_{i,c}
|M[i,c]\rangle\langle M[i,c]| .
\end{equation}
For an arbitrary matrix $X$, contraction of the Kronecker symbols gives directly
\begin{equation}
\bigl(\mathcal K_{\rm ang}X\bigr)^a{}_k
=
X^a{}_k+(\Tr X)\delta^a{}_k,
\end{equation}
or, equivalently,
\begin{equation}
\mathcal K_{\rm ang}X
=
X+(\Tr X)\mathbbm 1 .
\label{eq:Kang-action}
\end{equation}

Under the diagonal $SO(3)$ action, the matrix space decomposes as
\begin{equation}
\mathrm{Mat}_3
=
\mathbb R\,\mathbbm 1
\oplus
\mathfrak{so}(3)
\oplus
\mathrm{Sym}_0(3),
\end{equation}
corresponding respectively to the $J=0,1,2$ sectors, of dimensions $1$, $3$, and $5$. Equation~\eqref{eq:Kang-action} shows that $\mathcal K_{\rm ang}$ acts with eigenvalue $4$ on the trace component and with eigenvalue $1$ on both traceless components. It is therefore nondegenerate on the full nine-dimensional angular space, so $\rank V_A=9$ whenever $\hh\not\equiv0$. The total weights are consequently
\begin{equation}
w_0=4,\qquad
w_1=3,\qquad
w_2=5,
\end{equation}
and division by the corresponding multiplet dimensions gives
\begin{equation}
\kappa_0=4,\qquad
\kappa_1=\kappa_2=1.
\end{equation}
This proves \eqref{eq:weights}.
\end{proof}

\subsection{Exact orbital selection and the Henyey control profile}

\begin{theorem}[Orbital selection rule]
\label{thm:selection}
For the hedgehog~\eqref{eq:hedgehog} with any differentiable radial profile,
\begin{equation}
V_{(J,L)}
\equiv
P_{(J,L)}V_AP_{(J,L)}
=0,
\qquad L\neq1,
\label{eq:dark-rule}
\end{equation}
so every threshold outside the $L=1$ sector is a dark crossing. In the three $L=1$ channels the crossings are fully visible,
\[
\rank V_{(J,1)}=2J+1,
\]
and for a critical state with reduced radial function $u$,
\begin{equation}
\bigl\langle\psi_{(J,1)},V_A\psi_{(J,1)}\bigr\rangle
=
\frac{4\pi}{3}\,
\kappa_J\,
\frac{|\langle\rho,u\rangle|^2}{\|u\|^2},
\qquad
\rho(r)=r^2\hh(r),
\qquad
\kappa_0=4,\quad
\kappa_1=\kappa_2=1.
\label{eq:residue-hedgehog}
\end{equation}
The full horizon function therefore reduces to
\begin{equation}
\Hor(A)
=
\frac{4\pi}{3}
\sum_{J=0}^{2}
w_J\,
\bigl\langle\rho,h_{J1}^{-1}\rho\bigr\rangle,
\qquad
h_{J1}
=
-\partial_r^2+\frac{2}{r^2}+c_{J1}\hh(r),
\label{eq:horizon-reduced}
\end{equation}
where
\[
c_{01}=-2,
\qquad
c_{11}=-1,
\qquad
c_{21}=+1,
\]
and $w_J$ are the angular weights in~\eqref{eq:weights}.
\end{theorem}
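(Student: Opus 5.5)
The plan is to combine the orbital decomposition of the normal form \eqref{eq:normal-form} with Lemma~\ref{lem:kernel}. Since $\mathcal M$ commutes with $\bm L^2$ and $\bm J$, the ghost space splits orthogonally into invariant sectors $(J,L)$, each of the form (radial $L^2$ functions) $\otimes$ (a $(2J+1)$-dimensional angular multiplet). Lemma~\ref{lem:kernel} shows that $\operatorname{ran}V_A$ lies in $L=1$. Because $V_A\geq0$ is self-adjoint, it also annihilates the orthogonal complement of the $L=1$ sector. Hence $P_{(J,L)}V_AP_{(J,L)}=0$ for $L\neq1$, which gives \eqref{eq:dark-rule}. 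Every critical eigenvector of a threshold in such a channel lies in $\ker V_A$, so Proposition~\ref{prop:visibility} classifies the crossing as dark. This first step is immediate.

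For $L=1$ I will write $V_A$ in the product form suggested by \eqref{eq:hedgehog-source}. Each source column is $-r\hh(r)$ times the degree-one color-angular tensor $M[i,c]^a{}_k\hat x_k$. For $\psi^a(x)=f(r)\,X^a{}_k\hat x_k$ with $u=rf$, integrating in polar coordinates gives
\[
\langle m_{ic},\psi\rangle=-\int_0^R r^2\,dr\,r\hh(r) f(r)\int d\Omega\,\hat x_k\hat x_l\,M[i,c]^a{}_k X^a{}_l=-\frac{4\pi}{3}\langle\rho,u\rangle\,\langle M[i,c],X\rangle_F ,
\]
using $\int d\Omega\,\hat x_k\hat x_l=\tfrac{4\pi}{3}\delta_{kl}$ and $r^2\hh f=\rho\,u/r\cdot r$. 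I will recheck the powers of $r$ here, since this is where the normalization $\rho=r^2\hh$ gets pinned down. Summing the squares over $(i,c)$ and using $\mathcal K_{\rm ang}$ from \eqref{eq:Kang-action} then gives
\[
\langle\psi,V_A\psi\rangle=\Bigl(\frac{4\pi}{3}\Bigr)^2|\langle\rho,u\rangle|^2\,\langle X,\mathcal K_{\rm ang}X\rangle .
\]
The wavefunction norm is $\|\psi\|^2=\tfrac{4\pi}{3}\|X\|_F^2\,\|u\|^2$. For $X$ in the $J$-sector one has $\langle X,\mathcal K_{\rm ang}X\rangle=\kappa_J\|X\|_F^2$. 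Dividing yields \eqref{eq:residue-hedgehog}.

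The main obstacle is bookkeeping: matching the normalization of $Y^a_{JLM}$ to the Frobenius normalization of $X$, and confirming that $\mathrm{Mat}_3$ with the diagonal action realizes exactly the three $(J,1)$ multiplets. The $J=0,1,2$ pieces of $\mathrm{Mat}_3$ (trace, antisymmetric, symmetric-traceless) are the $L=1$ parts of $J=0,1,2$, because $1\otimes1=0\oplus1\oplus2$. Within each channel, $V_{(J,1)}$ is then the rank-one radial projector onto $\rho$ tensored with $\kappa_J$ times the identity on the multiplet. Its rank is therefore $2J+1$, and every critical state with $\langle\rho,u\rangle\neq0$ is visible. For a genuine critical eigenfunction I still need $\langle\rho,u\rangle\neq0$ to claim full visibility. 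I would argue this via the radial structure: $\rho$ is essentially the radial profile generating a known zero-mode-type identity of $h_{J1}$. Failing that, I would read "fully visible" as the rank statement for the sector block, which is all the rank claim asserts.

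Finally, \eqref{eq:horizon-reduced} follows from Proposition~\ref{prop:trace-identity}: $\Hor=\Tr(\mathcal M^{-1}V_A)$, and $\mathcal M$ is block-diagonal. Only the $L=1$ blocks contribute. On the $(J,1)$ block, $\mathcal M$ acts as $h_{J1}\otimes I$ with $c_{J1}$ read from \eqref{eq:cJL}, i.e.\ $-2,-1,+1$. The block of $V_A$ is $(4\pi/3)\,|\rho\rangle\langle\rho|\otimes\kappa_J I_{2J+1}$, with the same normalization as above. Taking the trace over the $2J+1$ states multiplies $\kappa_J$ to $w_J$. This gives $\tfrac{4\pi}{3}w_J\langle\rho,h_{J1}^{-1}\rho\rangle$, interpreted in $[0,\infty]$ via the limit in Proposition~\ref{prop:trace-identity}.
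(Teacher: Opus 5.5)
Your dark-sector argument is correct and follows the paper, and so is the reduction of $\Hor(A)$ through the block $\tfrac{4\pi}{3}\kappa_J|\rho\rangle\langle\rho|\otimes I_{2J+1}$. The residue computation is also correct; the powers of $r$ do close, since $\int_0^R r^2\cdot r\hh\cdot(u/r)\,\d r=\langle\rho,u\rangle$.

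The gap is the step you leave open. ``Fully visible'' refers to the compression of $V_A$ to the critical eigenspace. By your own formula, that compression has rank $2J+1$ exactly when the critical radial function satisfies $\langle\rho,u\rangle\neq0$. Your fallback, reading the rank claim as a statement about the whole $(J,1)$ sector block, does not prove the theorem. That block has rank $2J+1$ for every nonzero profile, whereas an $L=1$ zero mode with $u\perp\rho$ would give a dark crossing. The sector rank therefore says nothing about the crossing, and the strict positivity of the residue in \eqref{eq:residue-hedgehog} also remains unproved.

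The missing idea is to pair the zero-energy equation $-u''+2u/r^2+c_{J1}\hh u=0$ with $r^2$. This is the regular free solution of the $L=1$ radial operator, $(-\partial_r^2+2/r^2)\,r^2=0$. Multiply by $r^2$ and integrate over $(0,R)$. The bulk terms cancel exactly: the $-2\int_0^R u\,\d r$ produced by integrating the kinetic term by parts is matched by the $+2\int_0^R u\,\d r$ from the centrifugal term. Only a Wronskian boundary term survives:
\[
c_{J1}\int_0^R\rho\,u\,\d r=R^2u'(R)-2Ru(R).
\]
On the Dirichlet ball the right-hand side is $R^2u'(R)\neq0$, because $u(R)=u'(R)=0$ would force $u\equiv0$. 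On $\R^3$, a normalizable zero mode has the tail $u\sim\beta/r$ with $\beta\neq0$, and the right-hand side tends to $-3\beta$. Since $c_{J1}\in\{-2,-1,+1\}$ never vanishes, $\langle\rho,u\rangle\neq0$ in all three channels. For general $L$ the same computation leaves an extra term $[2-L(L+1)]\int_0^R u\,\d r$, which is why the identity is special to $L=1$.

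Add simplicity of the radial eigenvalue, so that the critical subspace in $(J,1)$ is one radial function tensored with the $(2J+1)$-multiplet. This then gives $\rank V_{(J,1)}=2J+1$. Without this identity, your argument establishes the dark rule, the residue identity and the reduction formula, but not full visibility.
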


\begin{proof}
By \cref{lem:kernel}, the range of $T_A$ lies entirely in the $L=1$ eigenspace of $\bm L^2$. Since $\bm L^2$ commutes with $\mathcal M$, its eigenspaces are invariant under the spectral projectors of $\mathcal M$. Hence every critical subspace with $L\neq1$ is orthogonal to $\operatorname{ran}T_A$, and therefore
\[
P_{(J,L)}T_A=0,
\qquad L\neq1.
\]
Equation~\eqref{eq:dark-rule} follows immediately from $V_A=T_AT_A^\ast$. The darkness of these channels is thus an exact subspace property, rather than a cancellation involving individual critical states.

For $L=1$, write
\[
\psi^a(x)=\frac{u(r)}{r}Y^a(\hat x),
\]
with $Y$ in a fixed diagonal-$SO(3)$ multiplet $J$. The angular decomposition obtained in \cref{lem:kernel} gives
\begin{equation}
\bigl\langle\psi,V_A\psi\bigr\rangle
=
\frac{4\pi}{3}\,
\kappa_J\,
\frac{|\langle\rho,u\rangle|^2}{\|u\|^2},
\end{equation}
because $\mathcal K_{\rm ang}$ acts with eigenvalue $\kappa_J$ on that multiplet. This establishes~\eqref{eq:residue-hedgehog} once the radial overlap is shown to be nonzero.

At a zero-energy crossing, the reduced radial equation implies, after multiplication by $r^2$ and integration by parts,
\begin{equation}
c_{JL}\langle\rho,u\rangle
=
R^2u'(R)
+
\bigl[2-L(L+1)\bigr]
\int_0^R u(r)\,\d r .
\end{equation}
The second term vanishes precisely for $L=1$. On a finite interval with Dirichlet boundary conditions,
\[
c_{J1}\langle\rho,u\rangle=R^2u'(R),
\]
which cannot vanish for a nontrivial solution, since $u(R)=u'(R)=0$ would imply $u\equiv0$. On $\R^3$, a normalizable zero-energy $L=1$ state has $u(r)\sim\beta/r$, and the corresponding identity becomes
\[
c_{J1}\langle\rho,u\rangle=-3\beta\neq0.
\]
Thus the restriction of $V_A$ to every $L=1$ critical multiplet is positive definite and
\[
\rank V_{(J,1)}=2J+1.
\]

Finally, each source column has the same reduced radial profile $\rho(r)$ and only $L=1$ angular support. Resolving the source space into its $J=0,1,2$ components therefore gives~\eqref{eq:horizon-reduced}, with the multiplicities encoded by $w_J$. Near a simple $(J,1)$ threshold,
\[
h_{J1}^{-1}
\simeq
\frac{|u\rangle\langle u|}
{\lambda\,\|u\|^2},
\]
so the singular part of~\eqref{eq:horizon-reduced} reproduces~\eqref{eq:residue-hedgehog} after summing over the $2J+1$ states of the multiplet.
\end{proof}

Henyey's profile provides a useful closed-form check of the selection rule while also showing that darkness does not by itself determine which channel crosses first. Set
\[
\hh(r)=\hg\,\varphi(r),
\qquad
\varphi(r)=\frac{9r}{(r^3+1)^2}.
\]
For
\[
u_L(r)
=
r^{L+1}(r^3+1)^{-(2L+1)/3},
\]
direct substitution into the free radial operator gives
\begin{equation}
\left(
-\partial_r^2+\frac{L(L+1)}{r^2}
\right)u_L
=
\frac{2(2L+1)(L+2)}{9}\,
\varphi(r)u_L(r).
\end{equation}
The zero-energy condition therefore fixes the threshold amplitude as
\begin{equation}
\hg_\ast(J,L)
=
-\frac{2(2L+1)(L+2)}
{9\,c_{JL}}.
\label{eq:henyey-thresholds}
\end{equation}
The lowest member reproduces Henyey's result~\cite{Henyey1979}. Since $u_L\sim r^{-L}$ at large $r$, all states with $L\geq1$ are square integrable. Combining~\eqref{eq:henyey-thresholds} with \cref{thm:selection} gives the visibility pattern summarized in \cref{tab:henyey}.

\begin{table}[H]
\centering
\renewcommand{\arraystretch}{1.25}
\begin{tabular}{cccccl}
\hline
$J^P$ & $L$ & $c_{JL}$ & $\hg_\ast$ & $\rank V_c$ & role along the ray\\
\hline
$0^-$ & $1$ & $-2$ & $1$     & $1$ & first positive crossing, fully visible\\
$1^+$ & $2$ & $-3$ & $40/27$ & $0$ & second positive threshold, dark\\
$2^-$ & $3$ & $-4$ & $35/18$ & $0$ & third positive threshold, dark\\
$1^-$ & $1$ & $-1$ & $2$     & $3$ & fourth positive threshold, fully visible\\
$2^+$ & $2$ & $-1$ & $40/9$  & $0$ & higher, dark\\
$2^-$ & $1$ & $+1$ & $-2$    & $5$ & first negative crossing, fully visible\\
\hline
\end{tabular}
\caption{Closed-form Henyey thresholds and their visibility under the orbital selection rule. The stability interval is $\hg\in(-2,1)$, whose two endpoints are fully visible $L=1$ crossings. The first dark positive threshold occurs in the $(J,L)=(1,2)$ channel.}
\label{tab:henyey}
\end{table}

The table separates threshold ordering from visibility. In particular, the first positive crossing is visible, whereas the next one is dark. It also shows that total angular momentum and parity do not determine visibility by themselves: the two $2^-$ channels have different visibility because they belong to different orbital sectors.

The first positive crossing also supplies a normalization check for the residue. At $\hg=1$, the Henyey zero mode is
\[
u_1(r)=\frac{r^2}{r^3+1},
\]
for which the radial overlap and norm are
\begin{equation}
\langle\rho,u_1\rangle=\frac{3}{2},
\qquad
\|u_1\|^2=\frac{4\pi}{9\sqrt3}.
\label{eq:radial-overlap}
\end{equation}
With $\kappa_0=4$, equation~\eqref{eq:residue-hedgehog} then gives $\lambda_1\Hor \longrightarrow 27\sqrt3 \quad(\hg\to1^-)$,
in units where the Henyey scale is unity. The finite-volume calculation converges to the same value when the threshold is determined before taking $R\to\infty$, with the expected $O(R^{-1})$ correction from truncating the critical $r^{-1}$ tail.

\section{Three-dimensional threshold ordering and dark first crossings}
\label{sec:dark}

Henyey's profile contains dark higher-$L$ channels, but both endpoints of its stability interval occur in the visible $L=1$ sector. Obtaining a dark first crossing therefore requires a different ordering of the radial thresholds. This ordering can be changed by varying the radial profile without affecting the source selection rule, since \eqref{eq:hedgehog-source} remains purely $L=1$ along the whole amplitude ray.

Consider a ray $\hh=\hg\varphi$ with $\varphi\geq0$. Attraction requires $c_{JL}\hg<0$, and \eqref{eq:cJL} shows that the strongest attractive channels at fixed $L$ are $c_{L-1,L}=-(L+1)$ for $\hg>0$ and $c_{L+1,L}=L$ for $\hg<0$. The two endpoints are therefore
\begin{equation}
\hg_\ast^+=\min_{L\geq1}\frac{\mu_L[\varphi]}{L+1},
\qquad
\bigl|\hg_\ast^-\bigr|=\min_{L\geq1}\frac{\mu_L[\varphi]}{L},
\label{eq:endpoints}
\end{equation}
where $\mu_L[\varphi]$ is the attractive coupling at which the radial quadratic form first loses positivity. On the half-line the essential spectrum already reaches zero, so the endpoint is characterized by this loss of positivity rather than by a downward motion of the spectral bottom. Its visibility is then fixed by the angular momentum $L$ that minimizes \eqref{eq:endpoints}.

To treat the delta shell and smooth narrow profiles in the same framework, let $\nu$ be a finite positive Borel measure supported in $[a,b]\subset(0,\infty)$. For $L\geq1$, define
\begin{equation}
Q_L[u]
=
\int_0^\infty |u'|^2\,\d r
+L(L+1)\int_0^\infty\frac{u^2}{r^2}\,\d r,
\label{eq:quadratic-form}
\end{equation}
and let
\begin{equation}
\mathcal Q_L
=
\overline{C_c^\infty(0,\infty)}^{\;Q_L^{1/2}}
\label{eq:energy-space}
\end{equation}
be the corresponding homogeneous energy space. Unlike the Friedrichs form domain
\[
\mathcal F_L=\mathcal Q_L\cap L^2(0,\infty),
\]
$\mathcal Q_L$ may contain non-square-integrable functions. This distinction is useful because the threshold variational problem is naturally formulated in $\mathcal Q_L$.

Since $\nu$ has compact support away from the origin, restriction to $[a,b]$ is continuous in the homogeneous energy norm. In particular, there is a finite constant $C_L(a,b)$ such that
\begin{equation}
\sup_{r\in[a,b]}|u(r)|^2
\leq
C_L(a,b)\,Q_L[u],
\qquad u\in\mathcal Q_L,
\label{eq:eval-bound}
\end{equation}
and, on $\mathcal F_L$, for every $\epsilon>0$ one has
\begin{equation}
\int u^2\,\d\nu
\leq
\epsilon Q_L[u]+C_\epsilon\|u\|^2
\label{eq:eval-infinitesimal}
\end{equation}
with some finite $C_\epsilon$. Thus the measure perturbation is infinitesimally form-bounded with respect to the free Friedrichs form, and the KLMN theorem defines a self-adjoint, lower-bounded operator $H_{L,\mu,\nu}$ associated with
\[
Q_L[u]-\mu\int u^2\,\d\nu.
\]
The continuity in \eqref{eq:eval-bound}, together with the density of $C_c^\infty(0,\infty)$ in $\mathcal Q_L$, also implies that minimizing over $\mathcal Q_L$ gives the same stability threshold as minimizing over $\mathcal F_L$.

For compactly supported $\nu$, the essential spectrum remains $[0,\infty)$. The threshold is therefore the value at which the quadratic form ceases to be nonnegative. The next lemma makes this statement precise and shows that the threshold is attained by an $L^2$ zero mode.

\begin{lemma}[Variational characterization of the threshold]
\label{lem:variational}
For $L\geq1$, define
\begin{equation}
\mu_L[\nu]
=
\inf\left\{
\frac{Q_L[u]}{\int u^2\,\d\nu}
:\; u\in\mathcal Q_L,
\ \int u^2\,\d\nu>0
\right\}.
\label{eq:variational}
\end{equation}
Then:
\begin{enumerate}
\item[(i)] the infimum is attained by a positive function $u_L$. Outside $\operatorname{supp}\nu$, the minimizer solves the free radial equation and satisfies $u_L(r)=O(r^{-L})$ as $r\to\infty$, hence $u_L\in L^2(0,\infty)$;
\item[(ii)] $Q_L[u]-\mu\int u^2\d\nu$ is nonnegative for $\mu\leq\mu_L[\nu]$ and takes negative values for $\mu>\mu_L[\nu]$;
\item[(iii)] at $\mu=\mu_L[\nu]$, the minimizer is a zero-energy eigenfunction of $H_{L,\mu_L,\nu}$.
\end{enumerate}
\end{lemma}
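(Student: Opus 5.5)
The plan is to use the direct method in the homogeneous energy space $\mathcal Q_L$, a Hilbert space with norm $Q_L^{1/2}$. The key fact is that the map $u\mapsto u|_{[a,b]}$ is compact from $\mathcal Q_L$ into $C([a,b])$. By \eqref{eq:eval-bound} this restriction is bounded. Moreover, $Q_L$ controls $\int_a^b|u'|^2$, so bounded sets restrict to uniformly bounded, equi-Hölder-$\tfrac12$ families on $[a,b]$, and Arzelà–Ascoli gives compactness. Consequently $u\mapsto\int u^2\,\d\nu$ is weakly sequentially continuous on $\mathcal Q_L$, since $\nu$ is finite and supported in $[a,b]$.

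For (i), take a minimizing sequence normalized by $\int u_n^2\,\d\nu=1$. Then $Q_L[u_n]\to\mu_L[\nu]$, and the bound \eqref{eq:eval-bound} shows $\mu_L[\nu]\geq 1/(C_L\,\nu([a,b]))>0$. Extract a weakly convergent subsequence $u_n\rightharpoonup u$. Compactness gives $\int u^2\,\d\nu=1$, and weak lower semicontinuity of $Q_L$ gives $Q_L[u]\leq\mu_L$, so $u$ is a minimizer.

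Positivity comes in two stages. First, $|u|\in\mathcal Q_L$ with $Q_L[|u|]=Q_L[u]$, so we may assume $u\geq0$. Second, the Euler–Lagrange equation in weak form reads
\[
\int u'v'\,\d r+L(L+1)\int\frac{uv}{r^2}\,\d r=\mu_L\int uv\,\d\nu
\qquad\text{for all } v\in\mathcal Q_L .
\]
This says that $-u''+L(L+1)r^{-2}u=\mu_L u\,\nu\geq0$ in the distributional sense. So $u$ is a nonnegative supersolution of the free radial equation, hence locally absolutely continuous with $u'$ of bounded variation. Strict positivity then follows from a Harnack or strong maximum principle argument: on intervals where $u$ vanishes, the measure term vanishes, so $u$ would have to be identically zero, contradicting $\int u^2\,\d\nu=1$.

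Outside $[a,b]$ the equation is free, so $u$ is a combination of $r^{L+1}$ and $r^{-L}$. Membership in $\mathcal Q_L$ means the energy is finite, and this excludes $r^{L+1}$ near infinity, since that term has infinite $\int|u'|^2$ there. The same reasoning excludes $r^{-L}$ near $0$. Hence $u=\beta r^{-L}$ for $r>b$, and $u\in L^2$ because $L\geq1$. Near the origin $u\sim r^{L+1}$, which is square integrable as well. In particular $u\in\mathcal F_L$.

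For (ii), if $\mu\leq\mu_L$ then the definition of the infimum gives $Q_L[u]\geq\mu_L\int u^2\,\d\nu\geq\mu\int u^2\,\d\nu$. If $\mu>\mu_L$, evaluating on the minimizer gives $Q_L[u_L]-\mu\int u_L^2\,\d\nu=(\mu_L-\mu)\int u_L^2\,\d\nu<0$. If one wants a negative value on a compactly supported function, density of $C_c^\infty$ in $\mathcal Q_L$ together with continuity from \eqref{eq:eval-bound} transfers the negativity to an approximant.

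For (iii), since $u_L\in\mathcal F_L$, the Euler–Lagrange identity with test functions $v\in\mathcal F_L\subset\mathcal Q_L$ states precisely that the KLMN form of $H_{L,\mu_L,\nu}$ satisfies $\mathfrak h[u_L,v]=0=\langle 0,v\rangle$. By the representation theorem, $u_L\in\mathcal D(H_{L,\mu_L,\nu})$ and $H_{L,\mu_L,\nu}u_L=0$.

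The main obstacle is the compactness and regularity near the endpoints of the support. Specifically, one must justify the weak continuity of $\int u^2\,\d\nu$ for a general measure, which is handled by the compactness of restriction to $[a,b]$ into $C([a,b])$. One must also show rigorously that the $r^{L+1}$ branch is excluded at infinity using only finiteness of $Q_L$, without assuming $u\in L^2$ a priori. I would handle the second point by noting that any combination with a nonzero $r^{L+1}$ coefficient has divergent $\int_b^\infty|u'|^2\,\d r$.
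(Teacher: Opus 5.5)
Your proposal is correct and follows essentially the same route as the paper. Both use the direct method in $\mathcal Q_L$ with compactness of the restriction to $[a,b]$ into $C([a,b])$, pass to $|u|$, select the decaying free branch $r^{-L}$ by finiteness of $Q_L$, read off part~(ii) from the Rayleigh quotient, and obtain part~(iii) from the Euler--Lagrange identity once $u_L\in\mathcal F_L$. Your extra steps only make explicit what the paper's proof leaves implicit: the bound $\mu_L[\nu]>0$, strict positivity via the strong maximum principle (at an interior zero $u'$ has no jump and must vanish, so ODE uniqueness forces $u\equiv0$), the $r^{L+1}$ behavior at the origin, and the appeal to the representation theorem.
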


\begin{proof}
Normalize a minimizing sequence by $\int u_n^2\d\nu=1$ and replace $u_n$ by $|u_n|$. Boundedness of $Q_L[u_n]$ gives weak compactness in $\mathcal Q_L$. On the compact support of $\nu$, the restriction map into $H^1(a,b)$ is bounded and the embedding $H^1(a,b)\hookrightarrow C([a,b])$ is compact. A subsequence therefore converges uniformly on $[a,b]$, so the normalization passes to the weak limit, while lower semicontinuity of $Q_L$ shows that the limit attains the infimum.

Outside $\operatorname{supp}\nu$, the Euler--Lagrange equation is free. Its radial solutions are $r^{L+1}$ and $r^{-L}$, with finite energy selecting the decaying branch at infinity. Since $L\geq1$, the resulting $r^{-L}$ tail is square integrable. Part~(ii) follows directly from the definition of the Rayleigh quotient, and stationarity at the minimizer gives the weak eigenvalue equation at $\mu=\mu_L$. Because the minimizer belongs to $L^2$, this weak solution is an actual zero-energy eigenfunction of $H_{L,\mu_L,\nu}$.
\end{proof}

When $\nu$ has a density $\varphi$, we continue to write $\mu_L[\varphi]$. The analytic thresholds above are defined on the half-line, whereas the numerical calculation uses a Dirichlet box. The finite-volume correction is controlled by the following estimate.

\begin{lemma}[Box versus half-line]
\label{lem:box-vs-line}
Let $\mu_L^{(R)}[\nu]$ be defined by \eqref{eq:variational} with the additional boundary condition $u(R)=0$, where $R>b$. Then $\mu_L^{(R)}\geq\mu_L$, the function $R\mapsto\mu_L^{(R)}$ is nonincreasing, and
\begin{equation}
0\leq
\frac{\mu_L^{(R)}[\nu]-\mu_L[\nu]}{\mu_L[\nu]}
\leq
C_{L,\nu}
\frac{(b/R)^{2L+1}}{1-(b/R)^{2L+1}},
\qquad
C_{L,\nu}
=
\frac{(2L+1)\beta^2b^{-2L-1}}{Q_L[u_L]},
\label{eq:box-correction}
\end{equation}
where $u_L(r)=\beta r^{-L}$ for $r\geq b$.
\end{lemma}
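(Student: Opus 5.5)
The plan is to compare the two Rayleigh quotients directly. Monotonicity and the lower bound $\mu_L^{(R)}\geq\mu_L$ are immediate from inclusion of trial spaces. Any $u$ in the energy space on $(0,R)$ with $u(R)=0$ extends by zero to an element of $\mathcal Q_L$ with the same quotient. Likewise, the box space for $R$ embeds into that for $R'>R$. Hence the infimum in \eqref{eq:variational} taken over the smaller class is larger, which gives both claims.

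For the upper bound I will build an explicit trial function from the half-line minimizer $u_L$ of \cref{lem:variational}. Outside $[a,b]$ it is $u_L(r)=\beta r^{-L}$ for $r\geq b$. On $[b,R]$ I replace this tail by the solution of the free radial equation $-w''+L(L+1)r^{-2}w=0$ with $w(b)=\beta b^{-L}$ and $w(R)=0$, namely
\[
w(r)=\beta b^{-L}\,\frac{r^{-L}-R^{-2L-1}r^{L+1}}{b^{-L}-R^{-2L-1}b^{L+1}} .
\]
The trial function $\tilde u$ equals $u_L$ on $(0,b]$ and $w$ on $[b,R]$. Since $\operatorname{supp}\nu\subset[a,b]$, the denominators agree, $\int\tilde u^2\d\nu=\int u_L^2\d\nu$. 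Therefore
\[
\mu_L^{(R)}\leq\frac{Q_L[\tilde u]}{\int u_L^2\d\nu}=\mu_L\,\frac{Q_L[\tilde u]}{Q_L[u_L]}.
\]
It remains to bound the energy excess $Q_L[\tilde u]-Q_L[u_L]$, which only involves the exterior region $[b,\infty)$.

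The main computation is the exterior energy of a free solution. For a free solution $f$ on $[b,c]$, integrating by parts gives
\[
\int_b^c\bigl(f'^2+L(L+1)f^2/r^2\bigr)\d r=[ff']_b^c .
\]
For the tail $\beta r^{-L}$ on $[b,\infty)$ this equals $L\beta^2b^{-2L-1}$. For $w$ on $[b,R]$ the boundary term at $R$ vanishes, leaving $-w(b)w'(b)$. Writing $q=(b/R)^{2L+1}$, one computes
\[
-w(b)w'(b)=\beta^2b^{-2L-1}\,\frac{L+(L+1)q}{1-q}.
\]
Subtracting gives the excess
\[
\beta^2b^{-2L-1}\,\frac{(2L+1)q}{1-q}.
\]
Dividing by $Q_L[u_L]$ produces exactly the constant $C_{L,\nu}$ and the factor $q/(1-q)$ in \eqref{eq:box-correction}.

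The step needing the most care is justifying that $\tilde u$ is admissible and that the form identities hold for the minimizer. Admissibility requires continuity at $b$, which holds by construction, so $\tilde u$ lies in the $H^1$-type energy space with $\tilde u(R)=0$. The integration by parts on the unbounded tail needs decay of $ff'\sim r^{-2L-1}$, which is fine for $L\geq1$. I also need $Q_L[u_L]>0$, and that the exterior piece of $u_L$ is exactly $\beta r^{-L}$ on $[b,\infty)$, both of which follow from \cref{lem:variational}(i). If $\nu$ charges the point $b$ itself, the matching at $b$ still only uses continuity, since the $\nu$-term is unchanged.
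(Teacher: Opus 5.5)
Your proposal is correct and is essentially the paper's proof: the lower bound and monotonicity follow from nesting of the trial classes, and the upper bound comes from keeping $u_L$ on $(0,b]$ and replacing its tail by the free solution vanishing at $R$. That replacement coincides with the paper's $v_R$ once the denominator is simplified to $b^{-L}(1-q)$, and your boundary-term evaluation $-w(b)w'(b)-L\beta^2b^{-2L-1}=\beta^2b^{-2L-1}(2L+1)q/(1-q)$ supplies the tail computation that the paper leaves implicit.
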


\begin{proof}
The inequality $\mu_L^{(R)}\geq\mu_L$ follows because the Dirichlet condition restricts the variational class, while monotonicity follows by enlarging the box. To estimate the difference, keep the half-line minimizer unchanged on $[0,b]$ and replace its tail by the unique free solution matching $u_L(b)$ and vanishing at $R$,
\[
v_R(r)
=
\frac{\beta}{1-(b/R)^{2L+1}}
\left(r^{-L}-R^{-2L-1}r^{L+1}\right).
\]
The interaction term is unchanged because $\nu$ is supported in $[a,b]$. Evaluating the free tail contribution and inserting this function into the box Rayleigh quotient gives \eqref{eq:box-correction}. For fixed $L$, the relative correction is therefore $O(R^{-2L-1})$.
\end{proof}

The finite-volume effect is largest in the lowest angular channels. For the representative choice $r_0=10$ and $w=0.4$, the $L=1$ data follow the expected $R^{-3}$ convergence, and at $R=125$ the relative shift is of order $5\times10^{-4}$. This is much smaller than the channel separations relevant below, which are of order $15\%$--$26\%$. The box therefore does not alter the threshold ordering in the parameter range used here, although its larger correction at small $L$ biases finite-$R$ estimates slightly toward higher angular momentum.

The opposite ordering of the positive and negative amplitude branches can already be understood from their angular factors. Increasing $L$ raises the centrifugal contribution to \eqref{eq:variational}, but for a profile concentrated near a large radius $r_0$ this cost is suppressed by $r_0^{-2}$. In the delta-shell limit, $\mu_L$ then grows only linearly with $L$. On the negative branch the attractive color-spin factor $|c_{L+1,L}|=L$ grows at the same rate, allowing $\mu_L/L$ to decrease with $L$.

\begin{lemma}[Delta-shell thresholds]
\label{lem:delta-shell}
For the unit shell $\nu=\delta_{r_0}$ and $L\geq1$,
\begin{equation}
\mu_L=\frac{2L+1}{r_0},
\qquad
u_L(r)
=
\begin{cases}
(r/r_0)^{L+1}, & r\leq r_0,\\[2pt]
(r_0/r)^L, & r\geq r_0,
\end{cases}
\label{eq:delta-shell}
\end{equation}
where $u_L$ is nodeless and
\[
\|u_L\|^2
=
r_0\left(\frac{1}{2L+3}+\frac{1}{2L-1}\right).
\]
Consequently,
\begin{equation}
\begin{aligned}
\bigl|\hg_\ast^-(L)\bigr|
&=\frac{\mu_L}{L}
=\frac1{r_0}\left(2+\frac1L\right),
&&\text{strictly decreasing},\\
\hg_\ast^+(L)
&=\frac{\mu_L}{L+1}
=\frac1{r_0}\frac{2L+1}{L+1},
&&\text{strictly increasing}.
\end{aligned}
\label{eq:branches}
\end{equation}
\end{lemma}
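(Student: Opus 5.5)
The plan is to solve the Euler--Lagrange problem behind \cref{lem:variational} explicitly for the unit shell $\nu=\delta_{r_0}$, and then to check that the explicit candidate really minimizes the quotient \eqref{eq:variational}.

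\textbf{Construct the zero mode.} Off the shell, the minimizer solves the free radial equation. Its solutions are $r^{L+1}$ and $r^{-L}$. Finiteness of $Q_L$ near the origin selects $r^{L+1}$ on $(0,r_0)$. Finite energy at infinity selects $r^{-L}$ on $(r_0,\infty)$. Continuity at $r_0$, with normalization $u(r_0)=1$, then gives the piecewise function in \eqref{eq:delta-shell}. This function is positive, hence nodeless. The weak eigenvalue equation for $Q_L[u]-\mu u(r_0)^2$ reduces to the jump condition $u'(r_0^-)-u'(r_0^+)=\mu\,u(r_0)$. The one-sided derivatives are $(L+1)/r_0$ and $-L/r_0$, so $\mu=(2L+1)/r_0$.

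\textbf{Show it is the minimizer.} For the minimality I would integrate by parts on each side of the shell: $Q_L[u_L]=u_L u_L'\big|_{0}^{r_0^-}+u_L u_L'\big|_{r_0^+}^{\infty}=(2L+1)/r_0$, while $\int u_L^2\,\d\nu=1$. This shows that the Rayleigh quotient of $u_L$ equals $(2L+1)/r_0$. To see that it is the infimum, I would use a direct argument rather than invoking uniqueness. For any $u\in\mathcal Q_L$, split the integral at $r_0$. On each side, the free-equation solution with the same boundary value $u(r_0)$ minimizes the one-sided energy. This is the Dirichlet principle: the one-sided form is convex, and its Euler--Lagrange equation is the free equation. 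That gives $Q_L[u]\geq u(r_0)^2(2L+1)/r_0$.

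I expect this comparison to be the main obstacle. The step needing care is that the one-sided minimizer must have the prescribed decay and regularity in the homogeneous space $\mathcal Q_L$. I would handle it by writing $u=u(r_0)u_L+w$ with $w(r_0)=0$, integrating by parts to kill the cross term, and then density of $C_c^\infty$.

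\textbf{Compute the norm and the branch thresholds.} The norm is an elementary integral. The inner piece is $\int_0^{r_0}(r/r_0)^{2L+2}\,\d r=r_0/(2L+3)$. The outer piece is $\int_{r_0}^\infty (r_0/r)^{2L}\,\d r=r_0/(2L-1)$, which is finite for $L\geq1$. Finally, substituting $\mu_L$ into \eqref{eq:endpoints} gives the two branch formulas in \eqref{eq:branches}. On the negative branch, $\mu_L/L=r_0^{-1}(2+1/L)$, which is strictly decreasing in $L$. On the positive branch, $\mu_L/(L+1)=r_0^{-1}\bigl(2-1/(L+1)\bigr)$, which is strictly increasing in $L$.
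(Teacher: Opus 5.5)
Your proposal is correct and follows essentially the same route as the paper. You fix $u(r_0)$, take the regular solution $r^{L+1}$ inside and the decaying solution $r^{-L}$ outside, evaluate $Q_L[u_L]=(L+1)/r_0+L/r_0$, and read off the branches from $|c_{L+1,L}|=L$ and $|c_{L-1,L}|=L+1$; your decomposition $u=u(r_0)u_L+w$ with $w(r_0)=0$ simply makes rigorous the one-sided minimization that the paper asserts in a single line.
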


\begin{proof}
At fixed $u(r_0)=1$, the minimization is free on either side of the shell. Regularity at the origin selects the $r^{L+1}$ solution, while square integrability at infinity selects $r^{-L}$, giving \eqref{eq:delta-shell}. Direct evaluation of the quadratic form yields
\[
Q_L[u_L]
=
\frac{L+1}{r_0}+\frac{L}{r_0}
=
\frac{2L+1}{r_0},
\]
which proves the threshold formula. The two branches in \eqref{eq:branches} then follow from $|c_{L+1,L}|=L$ and $|c_{L-1,L}|=L+1$.
\end{proof}

The delta shell therefore orders the two amplitude branches in opposite ways. On the positive branch the first threshold occurs at $L=1$ and is fully visible. On the negative branch the thresholds decrease toward the unattained limit $2/r_0$, with every $L\geq2$ channel dark. Giving the shell a finite width turns this limiting behavior into an attainable minimum while preserving quantitative control of the thresholds.

\begin{proposition}[Two-sided finite-width bound]
\label{prop:width}
Let $\varphi_w\geq0$ be bounded, normalized by $\int\varphi_w\,\d r=1$, and supported in $[r_0-w,r_0+w]$. For every $L\geq1$ with $\mu_Lw<1$,
\begin{equation}
\frac{\mu_L}{\bigl(1+\sqrt{\mu_Lw}\bigr)^2}
\leq
\mu_L[\varphi_w]
\leq
\frac{\mu_L}{\bigl(1-\sqrt{\mu_Lw}\bigr)^2},
\qquad
\mu_L=\frac{2L+1}{r_0}.
\label{eq:two-sided}
\end{equation}
Hence $\mu_L[\varphi_w]\to\mu_L$ as $w\to0$, with an $O(\sqrt w)$ bound on the convergence rate.
\end{proposition}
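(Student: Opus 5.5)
Both inequalities follow from comparing $\int u^2\,\d\varphi_w$ with $u(r_0)^2$ for $u\in\mathcal Q_L$. The comparison uses the fact that $u$ cannot vary much over an interval of width $w$ without paying energy in $Q_L$. Throughout, the delta-shell value $\mu_L=(2L+1)/r_0$ of \cref{lem:delta-shell} is used as the characterization
\[
u(r_0)^2\le Q_L[u]/\mu_L\qquad\text{for all } u\in\mathcal Q_L ,
\]
which is just \eqref{eq:variational} for $\nu=\delta_{r_0}$.

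The key pointwise estimate is a local oscillation bound. For $|r-r_0|\le w$,
\[
|u(r)-u(r_0)|\le\Bigl|\int_{r_0}^{r}u'\Bigr|\le\sqrt{w}\,\Bigl(\int|u'|^2\Bigr)^{1/2}\le\sqrt{w\,Q_L[u]} .
\]
Combining this with $|u(r_0)|\le\sqrt{Q_L[u]/\mu_L}$ gives
\[
|u(r)|\le |u(r_0)|+\sqrt{wQ_L[u]}\le\sqrt{Q_L[u]/\mu_L}\,\bigl(1+\sqrt{\mu_Lw}\bigr).
\]
Integrating against $\varphi_w$, which has unit mass, yields
\[
\int u^2\,\d\varphi_w\le \frac{Q_L[u]}{\mu_L}\bigl(1+\sqrt{\mu_Lw}\bigr)^2 .
\]
Taking the infimum of the Rayleigh quotient then gives the lower bound $\mu_L[\varphi_w]\ge \mu_L/(1+\sqrt{\mu_Lw})^2$. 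No smallness of $w$ is needed for this side.

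For the upper bound, insert the delta-shell minimizer $u_L$ from \eqref{eq:delta-shell} as a trial function. It has $u_L(r_0)=1$ and $Q_L[u_L]=\mu_L$. The oscillation bound gives, on the support, $|u_L(r)|\ge 1-\sqrt{w\mu_L}$, which is positive exactly because $\mu_Lw<1$. Hence
\[
\int u_L^2\,\d\varphi_w\ge (1-\sqrt{\mu_Lw})^2 ,
\]
and the Rayleigh quotient satisfies $\mu_L[\varphi_w]\le \mu_L/(1-\sqrt{\mu_Lw})^2$. The convergence statement then follows by expanding both bounds, each of which is $\mu_L(1+O(\sqrt w))$.

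The main point requiring care is justifying the pointwise manipulations on the homogeneous space $\mathcal Q_L$. Its elements must be absolutely continuous near $r_0$, with $\int|u'|^2\le Q_L[u]$; this holds by the density of $C_c^\infty$ and by \eqref{eq:eval-bound}. The other point is that the $\delta_{r_0}$ inequality holds on all of $\mathcal Q_L$ rather than only at the minimizer. Both follow from \cref{lem:variational} applied to $\nu=\delta_{r_0}$, together with the continuity of point evaluation, so I expect no genuine obstacle beyond this bookkeeping.
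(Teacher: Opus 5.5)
Your proposal is correct and follows the paper's proof. Both use the Cauchy--Schwarz oscillation bound $|u(r)-u(r_0)|\le\sqrt{|r-r_0|}\,\|u'\|_{L^2}$. For the upper bound both take the delta-shell minimizer $u_L$ as the trial function, and for the lower bound both use the delta-shell inequality $u(r_0)^2\le Q_L[u]/\mu_L$; the only cosmetic difference is that the paper first normalizes $Q_L[u]=1$, while you keep the estimate homogeneous.
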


\begin{proof}
The two estimates follow from the pointwise control
\[
|u(r)-u(r_0)|
\leq
\sqrt{|r-r_0|}\,\|u'\|_{L^2}.
\]
For the upper bound, use the delta-shell minimizer as a trial function. Since $u_L(r_0)=1$ and $\|u_L'\|_{L^2}^2\leq Q_L[u_L]=\mu_L$, one has $u_L(r)\geq1-\sqrt{\mu_Lw}$ on the support of $\varphi_w$, which gives the right-hand inequality in \eqref{eq:two-sided}.

For the lower bound, normalize an arbitrary trial function by $Q_L[u]=1$. The same pointwise estimate gives $|u(r)|\leq|u(r_0)|+\sqrt w$ on the support of $\varphi_w$, while the delta-shell variational principle implies $|u(r_0)|\leq\mu_L^{-1/2}$. Hence
\[
\int\varphi_wu^2\,\d r
\leq
\left(\mu_L^{-1/2}+\sqrt w\right)^2.
\]
Taking the supremum over normalized trial functions and inverting gives the left-hand inequality.
\end{proof}

\subsection{Finite-width existence theorem}

\begin{theorem}[Existence of a source-dark first crossing]
\label{thm:dark}
Let $\varphi_w$ be as in \cref{prop:width}. Then:
\begin{enumerate}
\item[(i)] for every $w>0$, the minimum of $L\mapsto \mu_L[\varphi_w]/L$ over $L\geq1$ is attained at a finite value $L_\ast(w)$;
\item[(ii)] if
\begin{equation}
\frac{w}{r_0}<5.2\times10^{-4},
\end{equation}
then $L_\ast(w)\geq2$.
\end{enumerate}
For such profiles, the negative endpoint $\hg=\hg^-_\ast$ is reached by a zero mode in a channel with $L\geq2$. Hence $V_c=0$ and the horizon function remains finite. On a Dirichlet ball this endpoint belongs to the first Gribov boundary, while on the half-line it marks the loss of positivity of the quadratic form.
\end{theorem}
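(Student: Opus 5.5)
The plan is to prove (i) with a crude centrifugal lower bound that forces $\mu_L[\varphi_w]/L\to\infty$, and (ii) by comparing only the channels $L=1$ and $L=2$ through the two-sided estimate of \cref{prop:width}. The darkness statement then follows from \cref{thm:selection} and \cref{prop:visibility}(b).

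\emph{Step 1: part (i).} \Cref{prop:width} does not apply at large $L$, because $\mu_L w<1$ eventually fails, so I would use a separate bound. Since $\varphi_w$ is bounded and supported in $[r_0-w,r_0+w]$,
\[
\int\varphi_w u^2\,\d r\leq\|\varphi_w\|_\infty(r_0+w)^2\int\frac{u^2}{r^2}\,\d r\leq\frac{\|\varphi_w\|_\infty(r_0+w)^2}{L(L+1)}\,Q_L[u].
\]
Inserting this into the Rayleigh quotient \eqref{eq:variational} gives
\[
\frac{\mu_L[\varphi_w]}{L}\geq\frac{L+1}{\|\varphi_w\|_\infty(r_0+w)^2}\longrightarrow\infty .
\]
Only finitely many $L$ can therefore lie below the value $\mu_1[\varphi_w]/1$. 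The minimum of $L\mapsto\mu_L[\varphi_w]/L$ is taken over a finite set and is attained at some $L_\ast(w)$.

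\emph{Step 2: part (ii).} It suffices to show $\mu_2[\varphi_w]/2<\mu_1[\varphi_w]$, since then $L=1$ is not the minimizer. Write $s=\sqrt{w/r_0}$, so that $\mu_1=3/r_0$ and $\mu_2=5/r_0$. The upper bound of \eqref{eq:two-sided} for $L=2$ and the lower bound for $L=1$ reduce the claim to
\[
\frac{5/2}{(1-\sqrt5\,s)^2}<\frac{3}{(1+\sqrt3\,s)^2},
\]
that is, $\sqrt{5/6}\,(1+\sqrt3\,s)<1-\sqrt5\,s$. This holds for $s<(1-\sqrt{5/6})/(\sqrt5+\sqrt{5/2})\approx0.02283$, i.e.\ for $w/r_0<5.21\times10^{-4}$, which covers the stated hypothesis. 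At such widths the conditions $\mu_Lw<1$ for $L=1,2$ are trivially satisfied.

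\emph{Step 3: the darkness conclusion.} By \eqref{eq:endpoints} and \cref{lem:variational}(iii), the negative endpoint is reached by an $L^2$ zero mode in the channel $(J,L)=(L_\ast+1,L_\ast)$ with $L_\ast\geq2$. Any other channels that tie at the same amplitude also have $L\geq2$, because the $L=1$ threshold is strictly higher by Step 2. The whole critical subspace therefore lies outside the $L=1$ sector. By \cref{thm:selection}, $\operatorname{ran}T_A$ lies in $L=1$ for every amplitude on the ray. Hence $P_c(\tau)V_{A(\tau)}P_c(\tau)=0$ identically, the approach is uniformly dark, and \cref{prop:visibility}(b) bounds $\Hor$, provided there is a gap.

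\emph{Step 4: the Dirichlet ball, the main obstacle.} Here one needs the ordering of Step 2 to survive $R<\infty$, and a uniform gap above the critical cluster. For the ordering, \cref{lem:box-vs-line} gives relative shifts $O(R^{-2L-1})$, which is smaller than the strict separation from Step 2 once $R$ is large. For the gap, the ball spectrum is discrete, so the remaining eigenvalues, including the visible $L=1$ channels, stay bounded below along the ray up to the endpoint. On the half-line I would state only the form-theoretic version, namely loss of positivity, as the theorem does. In that setting finiteness follows from the fact that the $L=1$ form remains strictly positive below its own threshold.
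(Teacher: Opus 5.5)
Your proposal is correct and follows essentially the same route as the paper. For (i) you use the same centrifugal lower bound $\mu_L[\varphi_w]/L\geq(L+1)/(\|\varphi_w\|_\infty(r_0+w)^2)$. For (ii) you use the same $L=1$ versus $L=2$ comparison through the two-sided width estimate, and your constant $5.21\times10^{-4}$ matches. Darkness comes from the orbital selection rule, and finiteness comes from the strictly subcritical visible $L=1$ sector.

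Your finite-$R$ check via the box-versus-half-line lemma is a small refinement the paper glosses over. On the half-line, where $\inf\sigma(\mathcal M)=0$, your phrase ``strictly positive below its own threshold'' must be read as two ingredients together: the relative coercivity $Q_{21}\geq\theta Q_1$ and the source bound $|\langle\rho,v\rangle|^2\leq C_\rho Q_1[v]$. This is exactly the argument of the paper's continuum-finiteness proposition.
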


\begin{proof}
A negative direction can occur only if the attractive potential overcomes the centrifugal barrier somewhere on the support of $\varphi_w$. Consequently,
\begin{equation}
\frac{\mu_L[\varphi_w]}{L}
>
\frac{L+1}{\|\varphi_w\|_\infty(r_0+w)^2},
\label{eq:tail-bound}
\end{equation}
which diverges with $L$ and proves part~(i).

For part~(ii), it is enough to exclude the only visible candidate on the negative branch, namely $L=1$. The two-sided estimate of \cref{prop:width}, applied to $L=1$ and $L=2$, gives
\begin{equation}
\frac{\mu_2[\varphi_w]}{2}
\leq
\frac{5}{2r_0\bigl(1-\sqrt{5w/r_0}\bigr)^2},
\qquad
\mu_1[\varphi_w]
\geq
\frac{3}{r_0\bigl(1+\sqrt{3w/r_0}\bigr)^2}.
\label{eq:bracket}
\end{equation}
Comparing these bounds yields the sufficient condition $w/r_0<5.2\times10^{-4}$. Thus the global minimizer is finite by part~(i) and satisfies $L_\ast\geq2$.

At the negative endpoint, the minimizing attractive channel is $(J,L)=(L_\ast+1,L_\ast)$. Since $L_\ast\geq2$, \cref{thm:selection} gives $V_c=0$, so the critical zero mode is absent from the source sector. Moreover,
\[
|\hg^-_\ast|
=
\frac{\mu_{L_\ast}[\varphi_w]}{L_\ast}
<
\mu_1[\varphi_w],
\]
and therefore all three visible $L=1$ resolvents remain below their thresholds. On a Dirichlet ball this gives the bound
\begin{equation}
\Hor
\leq
\frac{\Tr V_A}
{\lambda_{\min}\!\left(\mathcal M|_{L=1}\right)},
\label{eq:sector-bound}
\end{equation}
while \cref{prop:continuum-finite} establishes the corresponding statement on $\R^3$.
\end{proof}

The same centrifugal estimate bounds the angular range that can contain the minimizing channel.

\begin{corollary}[Finite angular range]
\label{cor:Lmax}
Let $m$ be a value of $\mu_L[\varphi_w]/L$ attained by some channel. Every minimizer satisfies
\begin{equation}
L_\ast
\leq
\|\varphi_w\|_\infty(r_0+w)^2m-1.
\label{eq:Lmax}
\end{equation}
Hence only finitely many angular channels can compete for the endpoint at fixed profile.
\end{corollary}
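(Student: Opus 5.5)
The bound follows from the centrifugal estimate \eqref{eq:tail-bound} used in part~(i) of \cref{thm:dark}, turned around. The plan has three steps: establish that estimate as a rigorous pointwise inequality, apply it at a minimizing channel, and solve for $L_\ast$.

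\emph{Step 1: the centrifugal estimate.} On the support $[r_0-w,r_0+w]$ of $\varphi_w$ one has the pointwise inequality
\[
\varphi_w(r)\leq\|\varphi_w\|_\infty\le \|\varphi_w\|_\infty\frac{(r_0+w)^2}{r^2}.
\]
Hence, for every $u\in\mathcal Q_L$,
\[
\int u^2\varphi_w\,\d r\leq\|\varphi_w\|_\infty(r_0+w)^2\int\frac{u^2}{r^2}\,\d r.
\]
Discarding the kinetic term in \eqref{eq:quadratic-form} gives
\[
Q_L[u]\geq\frac{L(L+1)}{\|\varphi_w\|_\infty(r_0+w)^2}\int u^2\varphi_w\,\d r.
\]
Taking the infimum in \eqref{eq:variational} then yields
\[
\frac{\mu_L[\varphi_w]}{L}\geq\frac{L+1}{\|\varphi_w\|_\infty(r_0+w)^2}.
\]
In fact the inequality is strict, because the kinetic term cannot vanish for a nonzero function in $\mathcal Q_L$. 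The weak version is all that is needed here.

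\emph{Step 2: apply it at a minimizer.} Let $L_\ast$ be any minimizer of $L\mapsto\mu_L[\varphi_w]/L$; \cref{thm:dark}(i) guarantees one exists. Since $m$ is a value attained by some channel, minimality gives $\mu_{L_\ast}[\varphi_w]/L_\ast\leq m$. Combining this with Step~1 at $L=L_\ast$ gives
\[
L_\ast+1\leq\|\varphi_w\|_\infty(r_0+w)^2\,\frac{\mu_{L_\ast}[\varphi_w]}{L_\ast}\leq\|\varphi_w\|_\infty(r_0+w)^2m.
\]
Rearranging is exactly \eqref{eq:Lmax}.

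\emph{Step 3: finiteness.} Only integers $L\geq1$ below the right-hand side of \eqref{eq:Lmax} can compete for the endpoint, so there are finitely many. In practice one can take $m=\mu_1[\varphi_w]$ and bound it above by \cref{prop:width}, which makes the cutoff explicit.

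The only step that needs any care is making Step~1 rigorous, since the paper stated \eqref{eq:tail-bound} only heuristically. The pointwise comparison of $\varphi_w$ with $(r_0+w)^2/r^2$ on the support settles it. After that the argument is pure algebra.
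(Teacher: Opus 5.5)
Your proof is correct and is essentially the paper's argument: the paper also combines the centrifugal bound \eqref{eq:tail-bound} with the fact that $m$ is attained. It phrases this contrapositively: a channel with $L+1>\|\varphi_w\|_\infty(r_0+w)^2m$ has $\mu_L[\varphi_w]/L>m$, so it cannot minimize. Your pointwise comparison $\varphi_w\le\|\varphi_w\|_\infty(r_0+w)^2/r^2$ on the support is a clean justification of that bound, which the paper only sketches, and as you note, the non-strict version already suffices.
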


\begin{proof}
Equation~\eqref{eq:tail-bound} implies that a channel with
\[
L+1>\|\varphi_w\|_\infty(r_0+w)^2m
\]
has $\mu_L[\varphi_w]/L>m$ and therefore cannot minimize the endpoint.
\end{proof}

To examine widths beyond the sufficient narrow-shell regime, we use the same normalized $C^\infty$ bump profile in all geometries,
\begin{equation}
\varphi_w(r)
=
\frac{1}{wI_0}
\exp\!\left[
-\frac{1}{1-\bigl((r-r_0)/w\bigr)^2}
\right]
\mathbf 1_{\{|r-r_0|<w\}},
\qquad
I_0
=
\int_{-1}^{1}e^{-1/(1-s^2)}\,\d s
\simeq0.44 .
\label{eq:bump}
\end{equation}
Its maximum is $\|\varphi_w\|_\infty=e^{-1}/(wI_0)$. The bound in \cref{thm:dark} is intentionally conservative because it follows from a uniform two-sided estimate. Direct threshold calculations show that the dark ordering survives for substantially broader shells and that the minimizing angular channel depends on the shell geometry. The examples in \cref{tab:shell} give $L_\ast=3,5,8$. For $r_0=10$, the visible $L=1$ endpoint and the dark $L=2$ endpoint coincide near
\[
\frac{w_c}{r_0}\simeq0.62,
\]
while the ordering remains dark up to $w/r_0=0.60$. No monotonicity in $w$ is assumed.

At $w=w_c$, the equality
\[
\mu_1[\varphi_w]
=
\frac{\mu_2[\varphi_w]}{2}
\]
makes the $(J,L)=(2,1)$ and $(3,2)$ channels critical at the same amplitude. Their direct sum has dimension $r=12$. The selection rule annihilates the $L=2$ septet but leaves the $L=1$ quintet fully visible, so
\begin{equation}
\rank V_c=5,
\qquad
0<\rank V_c<12.
\end{equation}
The crossing is therefore partially visible. The horizon function diverges, but it probes only five of the twelve critical directions. This realizes the intermediate-rank case of \eqref{eq:classification}; full visibility is impossible here because $\dim\mathcal K=9<12$.

\begin{table}[H]
\centering
\small
\setlength{\tabcolsep}{5pt}
\renewcommand{\arraystretch}{1.2}
\begin{tabular}{lcccccrc}
\hline
profile
& $L{=}1$
& $L{=}2$
& $L{=}3$
& $L{=}5$
& $L{=}8$
& minimizing $L$ ($|\hg^-_\ast|$)
& crossing\\
\hline
Henyey, exact
& $2.00$
& $2.22$
& $2.59$
& $3.42$
& $4.72$
& $1$ ($2.00$)
& visible\\
shell $r_0=4$, $w/r_0=0.10$
& $0.802$
& $0.698$
& $0.679$
& $0.694$
& $0.749$
& $3$ ($0.679$)
& \textbf{dark}\\
shell $r_0=10$, $w/r_0=0.04$
& $0.308$
& $0.262$
& $0.248$
& $0.243$
& $0.246$
& $5$ ($0.243$)
& \textbf{dark}\\
shell $r_0=25$, $w/r_0=0.016$
& $0.121$
& $0.102$
& $0.096$
& $0.092$
& $0.090$
& $8$ ($0.090$)
& \textbf{dark}\\
\hline
delta-shell limit
& $3/r_0$
& $5/(2r_0)$
& $7/(3r_0)$
& $11/(5r_0)$
& $17/(8r_0)$
& $L\to\infty$
& n/a\\
\hline
\end{tabular}
\caption{Negative-branch endpoints $|\hg_\ast^-(L)|=\mu_L[\varphi]/L$. Henyey's profile is minimized in the visible $L=1$ sector, whereas the representative smooth shells are minimized at the dark values $L=3,5,8$. The minimizing channel therefore depends on the radial geometry rather than on a fixed angular sector.}
\label{tab:shell}
\end{table}

Figure~\ref{fig:dark} shows the same ordering and its evolution with the shell width.

\begin{figure}[H]
\centering
\includegraphics[width=\textwidth]{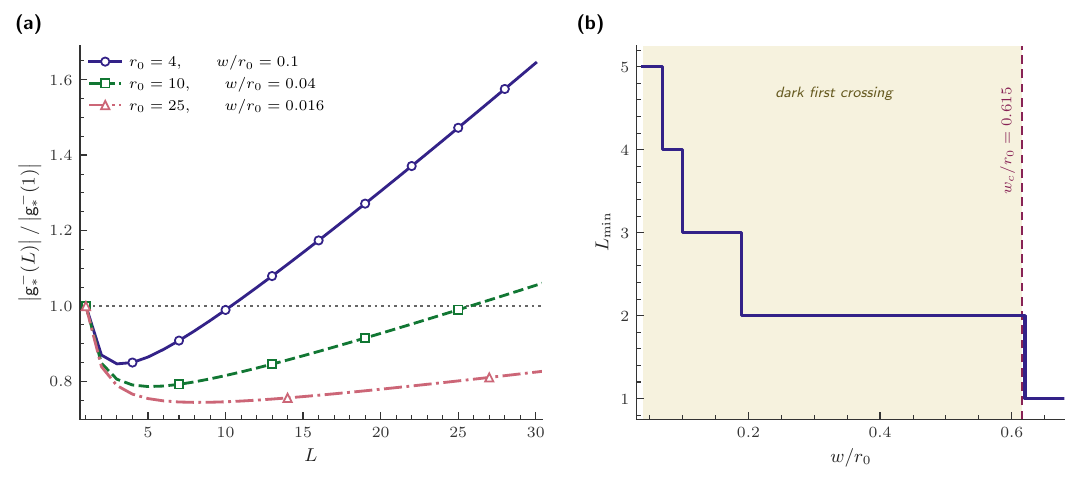}
\caption{Three-dimensional dark ordering beyond the sufficient narrow-shell regime. Panel (a) shows the negative-branch endpoints normalized to the visible $L=1$ value for representative smooth shells, with minima at $L=3,5,8$. Panel (b) follows the minimizing channel as the width varies at $r_0=10$. The visible and $L=2$ thresholds coincide near $w_c/r_0\simeq0.62$, while dark ordering persists up to $w/r_0=0.60$.}
\label{fig:dark}
\end{figure}

\begin{proposition}[Finiteness of the horizon function at a dark crossing]
\label{prop:continuum-finite}
Let $\hh=\hg\varphi$ with $\varphi\geq0$ compactly supported in $(0,\infty)$, and let $\hg=\hg^-_\ast$ be the negative endpoint of the stability interval. Assume that a minimizing channel satisfies $L_\ast\geq2$ and that the visible sector remains strictly below threshold,
\begin{equation}
|\hg^-_\ast|
=
\frac{\mu_{L_\ast}[\varphi]}{L_\ast}
<
\mu_1[\varphi].
\label{eq:strict-visible-gap-3d}
\end{equation}
Then, on $\R^3$, with
\begin{equation}
\langle\rho,h^{-1}\rho\rangle
:=
\lim_{\epsilon\downarrow0}
\langle\rho,(h+\epsilon)^{-1}\rho\rangle,
\label{eq:threshold-inverse}
\end{equation}
the horizon function remains finite,
\begin{equation}
\Hor(A)
=
\frac{4\pi}{3}
\left[
4\,\bigl\langle\rho,h_{01}^{-1}\rho\bigr\rangle
+
3\,\bigl\langle\rho,h_{11}^{-1}\rho\bigr\rangle
+
5\,\bigl\langle\rho,h_{21}^{-1}\rho\bigr\rangle
\right]
<\infty,
\label{eq:continuum-finite}
\end{equation}
even though $\mathcal M$ has an $L^2$ zero mode and $\inf\sigma(\mathcal M)=0$.
\end{proposition}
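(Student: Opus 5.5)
We would begin from \cref{prop:trace-identity} on $X=\R^3$. Since $\varphi$ is compactly supported and bounded, every source column $m_{\mu d}=-r\hh(r)M[i,c]\hat x$ lies in $L^2$, so \eqref{eq:source-admissibility} holds. Moreover $\Hor(A)=\lim_{\epsilon\downarrow0}\Hor_\epsilon(A)$ exists in $[0,\infty]$ by monotone convergence. We would then use the reduction of \cref{thm:selection}, applied to the regularized form. Because $\mathcal M+\epsilon$ commutes with $\bm L^2$, $\bm J$ and the radial structure, and $\operatorname{ran}T_A$ lies in $L=1$, we get
\[
\Hor_\epsilon(A)=\tfrac{4\pi}{3}\sum_{J=0}^2 w_J\langle\rho,(h_{J1}+\epsilon)^{-1}\rho\rangle .
\]
The $L\ge2$ sector, which carries the $L^2$ zero mode of \cref{lem:variational}(iii), never enters. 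Passing to $\epsilon\downarrow0$ term by term gives \eqref{eq:continuum-finite} with the definition \eqref{eq:threshold-inverse}. The remaining task is to show that each of the three radial limits is finite.

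Finiteness in the $L=1$ channels is the main step, and it cannot come from a spectral gap, since $\sigma_{\rm ess}(h_{J1})=[0,\infty)$. We would use the strict inequality \eqref{eq:strict-visible-gap-3d} to obtain a form bound against the free operator. The coupling in channel $(J,1)$ is $c_{J1}\hg^-_\ast\varphi$ with $\hg^-_\ast<0$. It is attractive only for $J=2$ ($c_{21}=1$), with strength $|\hg^-_\ast|<\mu_1[\varphi]$; for $J=0,1$ it is repulsive. Set $\theta=|\hg^-_\ast|/\mu_1[\varphi]<1$. The variational definition \eqref{eq:variational} then gives, for all $u\in\mathcal Q_1$,
\[
\langle u,h_{J1}u\rangle\ \ge\ (1-\theta)\,Q_1[u].
\]
Hence $\langle\rho,(h_{J1}+\epsilon)^{-1}\rho\rangle\le(1-\theta)^{-1}\langle\rho,(h_0+\epsilon)^{-1}\rho\rangle$, where $h_0=-\partial_r^2+2/r^2$ is the free $L=1$ radial operator. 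This follows by operator monotonicity of inversion applied to $h_{J1}+\epsilon\ge(1-\theta)h_0+\epsilon\ge(1-\theta)(h_0+\epsilon)$.

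It remains to bound the free quantity uniformly in $\epsilon$. Here we would use the dual characterization
\[
\langle\rho,(h_0+\epsilon)^{-1}\rho\rangle=\sup_u\frac{|\langle\rho,u\rangle|^2}{Q_1[u]+\epsilon\|u\|^2}\le\sup_{u\in\mathcal Q_1}\frac{|\langle\rho,u\rangle|^2}{Q_1[u]} .
\]
Since $\rho=r^2\hg\varphi$ is bounded with support in some $[a,b]$, the evaluation bound \eqref{eq:eval-bound} gives $|\langle\rho,u\rangle|^2\le\|\rho\|_{L^1}^2C_1(a,b)Q_1[u]$. The supremum is therefore finite: $\rho$ is a bounded functional on the homogeneous energy space $\mathcal Q_1$. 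This is the point where the continuum threshold is harmless. The potential source of divergence is $\lambda^{-1}$ near $\lambda=0$ in the continuous spectrum, and it is controlled because $\rho$ lies in the dual of $\mathcal Q_1$ rather than merely in $L^2$.

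The delicate point we expect is justifying the uniform lower form bound when $\hg^-_\ast$ exactly equals the $L_\ast$ threshold. The $(J,L)=(L_\ast+1,L_\ast)$ channel is only nonnegative, not strictly positive, but that channel is orthogonal to the source, so only the $L=1$ bound above is needed. We must also check that the reduction to radial forms commutes with the limit $\epsilon\downarrow0$. Since each term is monotone in $\epsilon$ and bounded by the $\epsilon$-independent constant just obtained, monotone convergence settles this. Finally, the existence of the $L^2$ zero mode and $\inf\sigma(\mathcal M)=0$ follow from \cref{lem:variational}(iii) in the $L_\ast$ channel and from the essential spectrum $[0,\infty)$.
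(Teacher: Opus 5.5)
Your proposal is correct and follows essentially the paper's route. The strict gap \eqref{eq:strict-visible-gap-3d} gives the coercivity $Q_{J1}\geq\theta_J Q_1$ in the three visible $L=1$ channels (trivially for the repulsive $J=0,1$), and compact support of $\rho$ with \eqref{eq:eval-bound} makes $\rho$ a bounded functional on $\mathcal Q_1$, so each visible inverse form is bounded by $C_\rho/\theta_J$. The differences are only in packaging: you compare to the free operator via operator monotonicity and track the regulator $\epsilon$ explicitly through monotone convergence, which matches the definition \eqref{eq:threshold-inverse} slightly more directly. The paper instead applies the Legendre representation \eqref{eq:legendre} directly to $h_{J1}^{-1}$.
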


\begin{proof}
Equation~\eqref{eq:horizon-reduced} already restricts the horizon function to the three visible $L=1$ channels. It is therefore enough to control their inverse quadratic forms at the endpoint. For each channel,
\begin{equation}
\bigl\langle\rho,h_{J1}^{-1}\rho\bigr\rangle
=
\sup_v
\left\{
2\langle\rho,v\rangle-Q_{J1}[v]
\right\},
\qquad
Q_{J1}[v]
=
Q_1[v]+c_{J1}\!\int\hh\,v^2 .
\label{eq:legendre}
\end{equation}

On the negative branch, the $J=0$ and $J=1$ potentials are repulsive, so $Q_{J1}\geq Q_1$. The $J=2$ channel is attractive, but \cref{lem:variational} and the strict-gap assumption give
\begin{equation}
Q_{21}[v]
\geq
\left(
1-\frac{|\hg^-_\ast|}{\mu_1[\varphi]}
\right)
Q_1[v]
\equiv
\theta Q_1[v],
\qquad
\theta>0.
\label{eq:theta-positive}
\end{equation}
Thus all three visible forms satisfy a coercive estimate $Q_{J1}\geq\theta_JQ_1$ with $\theta_J>0$.

Because $\rho$ has compact support, the evaluation bound \eqref{eq:eval-bound} also makes the source functional continuous in the $Q_1$ norm. Hence there is a finite constant $C_\rho$ such that
\begin{equation}
|\langle\rho,v\rangle|^2
\leq
C_\rho\,Q_1[v].
\label{eq:sobolev-bound}
\end{equation}
Substitution into \eqref{eq:legendre} gives
\[
\bigl\langle\rho,h_{J1}^{-1}\rho\bigr\rangle
\leq
\sup_{q\geq0}
\left(
2\sqrt{C_\rho q}-\theta_J q
\right)
=
\frac{C_\rho}{\theta_J},
\]
which proves \eqref{eq:continuum-finite}.

The zero mode itself lies in the dark channel $(L_\ast+1,L_\ast)$ and therefore does not contribute to the source-sector trace. At the same time, compact support of the perturbation leaves the essential spectrum at $[0,\infty)$, so $\inf\sigma(\mathcal M)=0$. The finiteness of $\Hor(A)$ follows from the exact source-sector reduction together with the strict separation of the visible $L=1$ sector from its own threshold.
\end{proof}

\section{Four-dimensional 't~Hooft hedgehog}
\label{sec:fourd}

The four-dimensional construction is based on an $SO(4)$-covariant hedgehog and therefore has its own angular selection rule, rather than a static embedding of the three-dimensional ansatz. Let $\eta_{a\mu\nu}$ denote the self-dual 't~Hooft symbols, with $\eta_{a\mu\nu}=\varepsilon_{a\mu\nu}$ for $\mu,\nu\leq3$, $\eta_{a\mu4}=\delta_{a\mu}$, and $\eta_{a4\nu}=-\delta_{a\nu}$. We consider
\begin{equation}
A^a_\mu(x)=\eta_{a\mu\nu}x_\nu h(r),
\qquad
r=|x|.
\label{eq:thooft-hedgehog}
\end{equation}
Antisymmetry of $\eta_{a\mu\nu}$ makes the field transverse for every differentiable radial profile. As before, we write $\hh=gh$. The regular-gauge BPST instanton corresponds to $\hh(r)=2/(r^2+\rho^2)$~\cite{BPST1975,tHooft1976}; for the threshold analysis we use the amplitude family
\begin{equation}
\hh_{\hg}(r)=\frac{2\hg}{r^2+\rho^2},
\label{eq:bpst-amplitude-family}
\end{equation}
so that the physical BPST configuration is recovered at $\hg=1$.

\subsection{Angular normal form and source support}

Introduce the self-dual and anti-self-dual orbital generators
\begin{equation}
J^{+}_a=\tfrac14\eta_{a\mu\nu}L_{\mu\nu},
\qquad
J^{-}_a=\tfrac14\bar\eta_{a\mu\nu}L_{\mu\nu},
\label{eq:so4-split}
\end{equation}
with $L_{\mu\nu}=-i(x_\mu\partial_\nu-x_\nu\partial_\mu)$. They generate the two commuting factors in the $SU(2)_+\times SU(2)_-$ decomposition of the orbital algebra. For adjoint color generators $(T_a)_{bc}=-i\varepsilon_{abc}$, the Faddeev--Popov operator takes the angular normal form
\begin{equation}
\varepsilon^{abc}\eta_{c\mu\nu}x_\nu\partial_\mu
=2\bigl(\bm T\!\cdot\!\bm J^{+}\bigr)_{ab},
\qquad
\mathcal M=-\partial^2+2\hh(r)\,\bm T\!\cdot\!\bm J^{+}.
\label{eq:normal-form-4d}
\end{equation}
Hence $\mathcal M$ preserves $(\bm J^{+})^2$, $(\bm J^{-})^2$, and $\bm K^2$, where $\bm K=\bm T+\bm J^{+}$.

A hyperspherical harmonic of degree $n$ carries $(j_+,j_-)=(n/2,n/2)$. In a channel of total $K$, the color--orbital coupling is
\begin{equation}
c_{Kj_+}
=\bm T\!\cdot\!\bm J^{+}
=\tfrac12\bigl[K(K+1)-j_+(j_++1)-2\bigr].
\label{eq:cK}
\end{equation}
For $j_+\geq1$, the channels $K=j_++1,j_+,j_+-1$ have $c=j_+,-1,-(j_++1)$. Degree $n=1$ is exceptional because $j_+=1/2$ does not admit the last member. The allowed couplings are therefore
\begin{equation}
K=\tfrac12,\quad 2c=-2,
\qquad
K=\tfrac32,\quad 2c=+1,
\label{eq:n1-allowed-4d}
\end{equation}
and the formally continued value $2c=-3$ is absent. This missing channel will reverse the positive-branch threshold ordering below.

Writing $\psi=f(r)Y_n$ and $u=r^{3/2}f$ gives the reduced radial equation
\begin{equation}
-u''+\frac{\ell(\ell+1)}{r^2}u+2c\,\hh(r)u=\lambda u,
\qquad
\ell=n+\tfrac12.
\label{eq:radial-4d}
\end{equation}
Along an amplitude ray, the perturbation on every fixed angular sector is a bounded multiplication operator linear in the amplitude. The angular decomposition is therefore preserved and the regularity assumptions used in \cref{prop:visibility} hold locally at each critical amplitude.

The source has the form
\begin{equation}
\bigl(T_Ae_{\mu d}\bigr)^a
=r\hh(r)\,N[\mu,d]^a{}_{\nu}\hat x_\nu,
\qquad
N[\mu,d]^a{}_{\nu}=\varepsilon^{a\ell d}\eta_{\ell\mu\nu}.
\label{eq:source-4d}
\end{equation}
Every source vector therefore has hyperspherical degree $n=1$, whereas the radial equation determines which degree reaches the stability boundary. For smooth Dirichlet profiles and compactly supported profiles on $\R^4$, the source is admissible and $V_A=T_AT_A^*$ is well defined. The regular-gauge BPST profile fails the corresponding $L^2$ condition and will be treated separately at the quadratic-form level.

\begin{theorem}[Four-dimensional source-selection rule]
\label{thm:selection-4d}
Let $h$ be differentiable and not identically zero, and assume the source is admissible in the sense of~\eqref{eq:source-admissibility}. For the background~\eqref{eq:thooft-hedgehog},
\begin{equation}
P_nV_AP_n=0
\qquad
\text{for every }n\neq1.
\label{eq:selection-4d}
\end{equation}
Thus every isolated crossing with $n\neq1$ is uniformly dark along the radial amplitude ray. The visibility operator has rank $12$, while in degree one its angular eigenvalues and total weights are
\begin{equation}
\kappa_{1/2}=4,\quad w_{1/2}=16,
\qquad
\kappa_{3/2}=1,\quad w_{3/2}=8.
\label{eq:weights-4d}
\end{equation}
Both allowed degree-one channels are fully visible on a Dirichlet ball.
\end{theorem}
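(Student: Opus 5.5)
The plan is to transcribe the proof of \cref{thm:selection} to the $SO(4)$ setting, in three steps: an exact closed form for the kernel of $V_A$, a degree-selection argument, and a radial overlap identity that rules out accidental darkness in degree one.

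\textbf{Kernel.} Starting from \eqref{eq:source-4d}, sum over the twelve labels $(\mu,d)$:
\[
V_A^{ab}(x,y)=\hh(r)\hh(r')\sum_{\mu,d}\varepsilon^{a\ell d}\varepsilon^{bkd}\,\eta_{\ell\mu\nu}\eta_{k\mu\sigma}\,x_\nu y_\sigma .
\]
Two identities are needed: the 't~Hooft identity $\eta_{\ell\mu\nu}\eta_{k\mu\sigma}=\delta_{\ell k}\delta_{\nu\sigma}+\varepsilon_{\ell km}\eta_{m\nu\sigma}$, and $\varepsilon^{a\ell d}\varepsilon^{bkd}=\delta_{ab}\delta_{\ell k}-\delta_{ak}\delta_{\ell b}$. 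Contracting them should give
\[
V_A^{ab}(x,y)=\hh(r)\hh(r')\bigl[2\delta^{ab}(x\!\cdot\!y)+\varepsilon_{abm}\eta_{m\nu\sigma}x_\nu y_\sigma\bigr],
\]
with signs to be checked. Every term is linear in $x$ and in $y$, so $\operatorname{ran}V_A$ consists of $\hh(r)x_\nu$ times a free colour index. These functions have hyperspherical degree $n=1$.

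\textbf{Selection.} The Laplacian $-\partial^2$ and the multiplication operator $\hh(r)$ commute with the hyperspherical Casimir, and $\bm T\!\cdot\!\bm J^+$ preserves degree. Hence $\mathcal M$ commutes with the degree projectors $P_n$, and so do its spectral projectors. Because $P_nT_A=0$ for $n\neq1$, we get $P_nV_AP_n=0$ for $n\neq1$, which is \eqref{eq:selection-4d}. This darkness holds at every amplitude, since the source stays in degree one along the whole ray. Therefore the hypothesis of \cref{prop:visibility}(b) holds identically, and such crossings are uniformly dark.

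\textbf{Angular spectrum and rank.} Identify degree-one colour-angular tensors with real $3\times4$ matrices $X^a{}_\nu$ carrying the Frobenius product. The kernel then induces the operator
\[
\mathcal K_{\rm ang}X=2X+\varepsilon_{abm}\eta_{m\nu\sigma}X^b{}_\sigma ,
\]
where the second piece is a multiple of $\bm T\!\cdot\!\bm J^+$ restricted to $j_+=\tfrac12$. For degree one, \eqref{eq:cK} gives $c=-1$ at $K=\tfrac12$ and $c=\tfrac12$ at $K=\tfrac32$. With the sign fixed by the contraction, $\mathcal K_{\rm ang}=2-2\,\bm T\!\cdot\!\bm J^+$, so its eigenvalues are $4$ and $1$. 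These two values fix the normalisation and serve as a check on the sign.

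The $j_-=\tfrac12$ doublet doubles the multiplicities: the $K=\tfrac12$ sector has dimension $4$ and the $K=\tfrac32$ sector has dimension $8$. This gives $w_{1/2}=16$ and $w_{3/2}=8$, i.e.\ \eqref{eq:weights-4d}. As a consistency check, $w_{1/2}+w_{3/2}=24=\Tr\mathcal K_{\rm ang}=\sum_{\mu,d}\|N[\mu,d]\|_F^2$. Both eigenvalues are nonzero, so $\mathcal K_{\rm ang}$ is invertible on the $12$-dimensional matrix space, and $\rank V_A=12$ whenever $\hh\not\equiv0$.

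\textbf{Full visibility.} This is the step I expect to be the main obstacle. It requires showing that the radial overlap with a degree-one zero mode never vanishes. The source radial profile in the variable $u=r^{3/2}f$ is $\rho(r)=r^{5/2}\hh(r)$, up to a constant. Multiply the zero-energy equation \eqref{eq:radial-4d} with $n=1$, $\ell=\tfrac32$ by $r^{5/2}$ and integrate by parts twice. Since $(r^{5/2})''=\tfrac{15}{4}r^{1/2}=\ell(\ell+1)r^{5/2}/r^2$, the centrifugal term cancels exactly, as the $L=1$ term did in three dimensions. What remains is
\[
2c\,\langle\rho,u\rangle=R^{5/2}u'(R),
\]
assuming the boundary terms at the origin vanish because $u\sim r^{5/2}$ there. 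On a Dirichlet ball the right-hand side is nonzero, since $u(R)=u'(R)=0$ would force $u\equiv0$. Moreover $c\neq0$ in both allowed degree-one channels. Hence $P_cV_AP_c$ is positive definite on each degree-one critical multiplet, so both channels are fully visible with ranks $2$ and $4$ per $j_-$ component, and $4$ and $8$ in total. The delicate points are the sign conventions in the $\eta$-contraction and the boundary terms at $r=0$. I would verify both on a single explicit component before trusting the general contraction.
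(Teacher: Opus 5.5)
Your proposal is correct and follows the paper's proof step for step: degree-one support of $\operatorname{ran}T_A$ together with degree preservation by $\mathcal M$ gives the selection rule, the contraction $2\delta_{ab}\delta_{\nu\sigma}+\varepsilon_{abm}\eta_{m\nu\sigma}=2I-2\,\bm T\!\cdot\!\bm J^{+}$ gives the eigenvalues $4,1$, the weights $16,8$ and rank $12$, and the $r^{5/2}$ virial identity $2c\langle\rho,u\rangle=R^{5/2}u'(R)$ gives full visibility. The sign you flag is indeed negative, since $J^+_m x_\sigma=-\tfrac{i}{2}\eta_{m\mu\sigma}x_\mu$ yields $\varepsilon_{abm}\eta_{m\nu\sigma}X^b{}_\sigma=-2(\bm T\!\cdot\!\bm J^{+}X)^a{}_\nu$; it must be fixed by this computation and not by the target values or the trace check, because the opposite sign gives eigenvalues $0$ and $3$ with the same trace $24$ and rank only $8$.
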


\begin{proof}
Equation~\eqref{eq:source-4d} places $\operatorname{ran}T_A$ entirely in degree one, while \eqref{eq:normal-form-4d} preserves each hyperspherical degree under $\mathcal M$ and its spectral projectors. Orthogonality of distinct degrees immediately gives \eqref{eq:selection-4d}.

At $n=1$, the orbital representation is $(j_+,j_-)=(\tfrac12,\tfrac12)$. Since color couples only to $j_+$,
\begin{equation}
1\otimes\tfrac12=\tfrac12\oplus\tfrac32,
\qquad
\dim\mathcal H_{K=1/2}=4,
\qquad
\dim\mathcal H_{K=3/2}=8.
\label{eq:degree-one-decomposition}
\end{equation}
Defining $C=2\bm T\!\cdot\!\bm J^+$, the identity $C=\bm K^2-\bm T^2-(\bm J^+)^2$ gives eigenvalues $-2$ and $+1$ on the $K=1/2$ and $K=3/2$ multiplets, respectively.

After removing the common radial factor from \eqref{eq:source-4d}, the angular source kernel is
\begin{equation}
(\mathcal K_{\rm src})_{a\nu,b\sigma}
=\sum_{\mu,d}N[\mu,d]^a{}_{\nu}N[\mu,d]^b{}_{\sigma}.
\label{eq:angular-source-kernel-4d}
\end{equation}
The standard contractions of $\varepsilon_{abc}$ and the 't~Hooft symbols reduce this kernel to
\begin{equation}
(\mathcal K_{\rm src})_{a\nu,b\sigma}
=2\delta_{ab}\delta_{\nu\sigma}+\varepsilon_{abc}\eta_{c\nu\sigma}.
\label{eq:angular-source-contraction-4d}
\end{equation}
On the Cartesian degree-one basis this is equivalently
\begin{equation}
{\mathcal K_{\rm src}=2I-C.}
\label{eq:source-identity-4d}
\end{equation}
Since $C=-2$ on $K=1/2$ and $C=+1$ on $K=3/2$,
\[
\mathcal K_{\rm src}=4P_{1/2}+P_{3/2}.
\]
The eigenvalues are therefore $4$ and $1$, and multiplication by the corresponding dimensions $4$ and $8$ gives the weights in \eqref{eq:weights-4d}. In particular, $\rank\mathcal K_{\rm src}=12$.

It remains to exclude a vanishing radial overlap in the visible channels. For a zero mode on a ball of radius $R$, multiplication of \eqref{eq:radial-4d} by $r^{5/2}$ and integration by parts gives
\begin{equation}
2c\,\langle\rho,u\rangle=R^{5/2}u'(R),
\qquad
\rho(r)=r^{5/2}\hh(r).
\label{eq:virial-4d}
\end{equation}
A nontrivial Dirichlet solution cannot satisfy both $u(R)=0$ and $u'(R)=0$, so the overlap is nonzero. Both degree-one multiplets are therefore fully visible.
\end{proof}

Using
\[
\int_{S^3}\hat x_\nu\hat x_\sigma\,\d\Omega
=\frac{\pi^2}{2}\delta_{\nu\sigma},
\]
the source-sector trace reduces to the two visible one-dimensional resolvents,
\begin{equation}
\Hor(A)=\frac{\pi^2}{2}\left[
16\langle\rho,h_{1/2}^{-1}\rho\rangle
+8\langle\rho,h_{3/2}^{-1}\rho\rangle
\right],
\label{eq:horizon-reduced-4d}
\end{equation}
with
\[
h_K=-\partial_r^2+\frac{15}{4r^2}+2c_K\hh(r).
\]
For a normalized state the pole coefficient is weighted by $\kappa_K$, whereas the trace over an entire critical multiplet carries $w_K$.

\subsection{Dark crossings on both amplitude branches}

Let $\hh=\hg\varphi$ with $\varphi\geq0$ and define the radial threshold functional
\begin{equation}
\mu_n^{(4)}[\varphi]
=\inf_{u\neq0}
\frac{\displaystyle\int_0^\infty
\left(|u'|^2+\frac{(n+\frac12)(n+\frac32)}{r^2}u^2\right)\d r}
{\displaystyle\int_0^\infty\varphi(r)u(r)^2\d r}.
\label{eq:mu-4d}
\end{equation}
On the negative branch the attractive channel has $2c=n$, and hence
\begin{equation}
|\hg_\ast^-(n)|=\frac{\mu_n^{(4)}[\varphi]}{n}.
\label{eq:negative-4d}
\end{equation}
The positive branch must retain the degree-one exception,
\begin{equation}
\hg_\ast^+(1)=\frac{\mu_1^{(4)}[\varphi]}{2},
\qquad
\hg_\ast^+(n)=\frac{\mu_n^{(4)}[\varphi]}{n+2}
\quad(n\geq2).
\label{eq:positive-4d}
\end{equation}

For a unit delta shell at $r_0$, the same matching argument as in \cref{lem:delta-shell} gives
\begin{equation}
\mu_n^{(4)}=\frac{2n+2}{r_0}.
\label{eq:delta-mu-4d}
\end{equation}
The two branches then become
\begin{equation}
|\hg_\ast^-(n)|
=\frac{2}{r_0}\left(1+\frac1n\right),
\qquad
\hg_\ast^+(1)=\frac2{r_0},
\qquad
\hg_\ast^+(n)=\frac{2(n+1)}{(n+2)r_0}\quad(n\geq2).
\label{eq:branches-4d}
\end{equation}
The negative thresholds decrease toward the unattained limit $2/r_0$ through dark degrees. On the positive branch, however, the minimum occurs already at $n=2$, with value $3/(2r_0)$. Thus the first positive crossing is also dark, a direct consequence of the missing degree-one coupling $2c=-3$.

The same ordering survives for a nonempty class of smooth finite-width profiles.

\begin{theorem}[Finite-width source-dark first crossings in four dimensions]
\label{thm:dark-4d}
Let $\varphi_w\geq0$ be bounded, normalized by $\int\varphi_w\d r=1$, and supported in $[r_0-w,r_0+w]$. The minimizing degree is finite on each amplitude branch. Both minimizers satisfy $n\geq2$ whenever
\begin{equation}
\frac{w}{r_0}
<
\left(
\frac{2-\sqrt3}{2\sqrt3+2\sqrt6}
\right)^2
\simeq 1.0\times10^{-3}.
\label{eq:width-4d}
\end{equation}
Both first crossings are then uniformly dark, and compact support makes the source admissible in the sense of~\eqref{eq:source-admissibility}.
\end{theorem}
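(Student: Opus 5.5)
The plan is to transcribe the three-dimensional argument of \cref{thm:dark}, with three changes: the centrifugal index becomes $\ell=n+\tfrac12$, the two branch formulas \eqref{eq:negative-4d} and \eqref{eq:positive-4d} replace \eqref{eq:endpoints}, and \cref{thm:selection-4d} replaces \cref{thm:selection}.

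\emph{Step 1: carry over the radial tools.} First I would note that \cref{lem:variational}, \cref{lem:delta-shell} and \cref{prop:width} use the angular index only through the centrifugal coefficient $L(L+1)$ and the free solutions $r^{L+1}$, $r^{-L}$. They therefore hold verbatim with $L$ replaced by $\ell=n+\tfrac12\geq\tfrac32$.
\begin{itemize}
\item The decaying tail $r^{-\ell}$ is square integrable, so the threshold is again attained by an $L^2$ zero mode.
\item The delta-shell value is $(2\ell+1)/r_0=(2n+2)/r_0$, in agreement with \eqref{eq:delta-mu-4d}.
\end{itemize}
The proof of \cref{prop:width} used only $\|u'\|^2\leq Q[u]$ and the delta-shell variational bound $|u(r_0)|^2\leq Q[u]/\mu$. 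Consequently the two-sided estimate \eqref{eq:two-sided} holds with $\mu_L$ replaced by $\mu_n^{(4)}=(2n+2)/r_0$, whenever $\mu_n^{(4)}w<1$.

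\emph{Step 2: finiteness of the minimizing degree.} As in \eqref{eq:tail-bound}, a negative direction requires the potential to beat the centrifugal barrier on $\operatorname{supp}\varphi_w$. This gives
\[
\mu_n^{(4)}[\varphi_w]>\frac{(n+\frac12)(n+\frac32)}{\|\varphi_w\|_\infty(r_0+w)^2}.
\]
The right-hand side grows like $n^2$. Dividing by $n$ (negative branch) or by $n+2$ (positive branch) still gives a quantity that diverges with $n$. Hence on each branch only finitely many degrees compete, and the minimum is attained.

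\emph{Step 3: excluding $n=1$ quantitatively.} This is the step where care is needed, and the constant in \eqref{eq:width-4d} must emerge. Set $s=w/r_0$. On each branch it suffices to show that the upper bound for the $n=2$ endpoint lies below the lower bound for the $n=1$ endpoint.
\begin{itemize}
\item Positive branch. The comparison is
\[
\frac{\mu_2^{(4)}[\varphi_w]}{4}\leq\frac{3}{2r_0(1-\sqrt{6s})^2}
<\frac{2}{r_0(1+2\sqrt{s})^2}\leq\frac{\mu_1^{(4)}[\varphi_w]}{2}.
\]
This reduces to $\sqrt3(1+2\sqrt s)<2(1-\sqrt6\sqrt s)$, that is, $\sqrt s\,(2\sqrt3+2\sqrt6)<2-\sqrt3$.
\item Negative branch. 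The comparison is $3/\bigl(r_0(1-\sqrt{6s})^2\bigr)<4/\bigl(r_0(1+2\sqrt s)^2\bigr)$, which yields the identical inequality.
\end{itemize}
Condition \eqref{eq:width-4d} therefore forces both minimizers to have $n\geq2$. It also guarantees $\mu_n^{(4)}w<1$ for $n=1,2$, so the use of Step 1 is legitimate. This comparison also shows that the visible degree-one channels sit strictly below their thresholds at each endpoint.

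\emph{Step 4: darkness and admissibility.} Compact support of $\varphi_w$ makes every source column \eqref{eq:source-4d} square integrable, so \eqref{eq:source-admissibility} holds and $V_A$ is defined along the whole ray. By \cref{thm:selection-4d}, $P_nV_AP_n=0$ for every $n\neq1$ and every amplitude. This is an exact subspace statement independent of $\hg$. The critical projector at either endpoint lies in degree $n\geq2$, so $P_c(\tau)V_{A(\tau)}P_c(\tau)=0$ identically along the approach. That is precisely uniform darkness in the sense of \cref{prop:visibility}(b).
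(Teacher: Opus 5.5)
Your proposal is correct and follows the paper's proof essentially step for step. You transfer \cref{prop:width} to the effective index $\ell=n+\tfrac12$, use a single bracket comparing $\mu_2^{(4)}[\varphi_w]/2$ with $\mu_1^{(4)}[\varphi_w]$ (divided by two for the positive branch) to get exactly the constant in \eqref{eq:width-4d}, obtain a finite minimizing degree from centrifugal growth against a linear color factor, and conclude darkness from \cref{thm:selection-4d}. Your explicit check that \eqref{eq:width-4d} also secures $\mu_n^{(4)}w<1$ for $n=1,2$ verifies a hypothesis of \cref{prop:width} that the paper leaves implicit.
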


\begin{proof}
The finite-width estimate of \cref{prop:width} depends only on the centrifugal coefficient and therefore applies to the half-integer effective angular momentum in four dimensions. For $s=\sqrt{w/r_0}$ and the delta-shell values $\mu_1^{(4)}=4/r_0$ and $\mu_2^{(4)}=6/r_0$, it gives
\begin{equation}
\frac{\mu_2^{(4)}[\varphi_w]}{2}
\leq
\frac{3}{r_0(1-\sqrt6\,s)^2},
\qquad
\mu_1^{(4)}[\varphi_w]
\geq
\frac{4}{r_0(1+2s)^2}.
\label{eq:width-bracket-4d}
\end{equation}
Condition~\eqref{eq:width-4d} makes the first bound smaller than the second, so
\begin{equation}
\frac{\mu_2^{(4)}[\varphi_w]}{2}
<
\mu_1^{(4)}[\varphi_w].
\label{eq:exclude-n1-4d}
\end{equation}
This excludes $n=1$ on the negative branch, and after dividing the two sides by two it also excludes $n=1$ on the positive branch. Finally, the centrifugal contribution grows quadratically with $n$, while the attractive color factor grows only linearly. The endpoint sequences therefore diverge for large $n$, so each minimum is attained at a finite degree. By \eqref{eq:selection-4d}, every possible minimizer is dark.
\end{proof}

The same centrifugal estimate bounds the degrees that need to be included in a numerical search.

\begin{corollary}[Finite hyperspherical range]
\label{cor:nmax}
Let $\varphi_w$ be as in \cref{thm:dark-4d}, set $K=\|\varphi_w\|_\infty(r_0+w)^2$, and let $m$ be any endpoint value attained on the branch under consideration. Every minimizer satisfies
\begin{equation}
n_\ast<Km-2
\quad\text{on the negative branch},
\qquad
n_\ast<Km
\quad\text{on the positive branch}.
\label{eq:nmax}
\end{equation}
Thus only finitely many hyperspherical degrees can compete for the endpoint at fixed profile.
\end{corollary}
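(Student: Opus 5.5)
The plan is to mirror the proof of \cref{cor:Lmax}. First I will establish a four-dimensional analogue of the centrifugal tail bound \eqref{eq:tail-bound}. Then I will insert it into the branch formulas \eqref{eq:negative-4d} and \eqref{eq:positive-4d} and solve the resulting inequality for $n$.

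\emph{Step 1 (centrifugal lower bound).} For any admissible trial function $u$ in \eqref{eq:mu-4d}, I drop the kinetic term and use $r\leq r_0+w$ on $\operatorname{supp}\varphi_w$:
\[
\int_0^\infty\frac{(n+\frac12)(n+\frac32)}{r^2}u^2\,\d r
\;\geq\;
\frac{(n+\frac12)(n+\frac32)}{(r_0+w)^2}\int_{\operatorname{supp}\varphi_w}u^2\,\d r
\;\geq\;
\frac{(n+\frac12)(n+\frac32)}{K}\int\varphi_w u^2\,\d r .
\]
This gives $\mu_n^{(4)}[\varphi_w]\geq(n+\frac12)(n+\frac32)/K$. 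To make the inequality strict, I will use that the infimum is attained. This follows from the variational argument of \cref{lem:variational}, which depends only on the centrifugal coefficient and the compact support, so it transfers to $\ell=n+\frac12$. For the minimizer the discarded term $\int|u'|^2$ is strictly positive, since a nonzero finite-energy $u$ vanishing at the origin cannot be constant. Hence
\[
\mu_n^{(4)}[\varphi_w]>\frac{n^2+2n+\frac34}{K}.
\]

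\emph{Step 2 (negative branch).} A minimizer $n_\ast$ satisfies $\mu_{n_\ast}^{(4)}/n_\ast\leq m$, because $m$ is an attained endpoint value and the minimum is no larger. Combining this with Step 1 gives $n_\ast^2+2n_\ast+\frac34<Kmn_\ast$. Dividing by $n_\ast$ yields $n_\ast+2+\tfrac{3}{4n_\ast}<Km$, and therefore $n_\ast<Km-2$.

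\emph{Step 3 (positive branch).} If $n_\ast\geq2$, \eqref{eq:positive-4d} gives $(n_\ast^2+2n_\ast+\frac34)/(n_\ast+2)<Km$. Since $n^2+2n+\frac34>n(n+2)$, the left side exceeds $n_\ast$, so $n_\ast<Km$. If the minimizer is the exceptional degree $n_\ast=1$ with divisor $2$, the same bound reads $\tfrac{15}{8}<Km$, which again implies $1<Km$. Since both bounds are finite for fixed $m$, only finitely many degrees can compete.

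The only delicate point is the strictness in Step 1, which rests on attainment of the infimum. If I prefer to avoid relying on attainment, the non-strict inequality still yields $n_\ast\leq Km-2-\tfrac{3}{4n_\ast}<Km-2$. This route gives strictness on the negative branch directly, and the positive branch is already strict because $n^2+2n+\frac34>n(n+2)$. So I would write the argument via this route to avoid the attainment issue altogether.
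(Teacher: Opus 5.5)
Your proof is correct and follows the paper's route. The paper derives the same centrifugal tail bound $\mu_n^{(4)}[\varphi_w]\,\|\varphi_w\|_\infty>(n+\frac12)(n+\frac32)/(r_0+w)^2$ and divides by the branch-dependent color factor, including the exceptional degree-one divisor on the positive branch; your pointwise comparison that drops the kinetic term, and your observation that the surplus $\tfrac34$ already makes the conclusions strict without needing attainment, simply spell out the paper's one-line justification more carefully.
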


\begin{proof}
A negative value of the quadratic form requires the attractive well to overcome the centrifugal term somewhere on the support. With $\ell=n+\tfrac12$, this implies
\begin{equation}
\mu_n^{(4)}[\varphi_w]\,\|\varphi_w\|_\infty
>
\frac{(n+\frac12)(n+\frac32)}{(r_0+w)^2}.
\label{eq:tail-bound-4d}
\end{equation}
Dividing by the branch-dependent color factor gives the bounds in \eqref{eq:nmax} directly.
\end{proof}

\begin{proposition}[Continuum finiteness for compact four-dimensional profiles]
\label{prop:continuum-finite-4d}
Assume the hypotheses of \cref{thm:dark-4d}, or more generally let the profile be compactly supported in $(0,\infty)$ with a dark endpoint strictly below the visible degree-one endpoint on the same branch. Then the source is admissible and, with the inverse forms defined by the regulator prescription~\eqref{eq:threshold-inverse}, the reduced horizon function~\eqref{eq:horizon-reduced-4d} remains finite at the dark crossing on $\mathbb R^4$.
\end{proposition}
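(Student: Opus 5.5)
The plan is to follow the proof of \cref{prop:continuum-finite} with four-dimensional data. Source admissibility comes first. Compact support of $\varphi$ in $(0,\infty)$ makes each column in \eqref{eq:source-4d} bounded and compactly supported, so \eqref{eq:source-admissibility} holds. By \cref{thm:selection-4d}, $\operatorname{ran}T_A$ lies in degree $n=1$. Since $\mathcal M$ preserves each hyperspherical degree, the regulated form $\Hor_\epsilon$ reduces, for every $\epsilon>0$, to the two degree-one radial resolvents in \eqref{eq:horizon-reduced-4d}. The critical zero mode has degree $n\geq2$, so it and all dark degrees drop out. Monotone convergence, as in \cref{prop:trace-identity}, then identifies $\Hor(A)$ with the sum of the limits $\langle\rho,h_K^{-1}\rho\rangle$ defined through \eqref{eq:threshold-inverse}. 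It therefore suffices to show that $\langle\rho,h_{1/2}^{-1}\rho\rangle$ and $\langle\rho,h_{3/2}^{-1}\rho\rangle$ are finite.

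Each of these is controlled through the Legendre representation \eqref{eq:legendre}. Let
\[
Q^{(4)}_1[v]=\int_0^\infty\bigl(|v'|^2+\tfrac{15}{4r^2}v^2\bigr)\d r ,
\]
and define $Q_K[v]=Q^{(4)}_1[v]+2c_K\hg\int\varphi v^2$, with $2c_{1/2}=-2$ and $2c_{3/2}=+1$. On the negative branch the $K=1/2$ channel is repulsive, so $Q_{1/2}\geq Q^{(4)}_1$. The attractive $K=3/2$ channel has $|2c_{3/2}\hg|=|\hg^-_\ast|$, and the variational definition \eqref{eq:mu-4d}, the four-dimensional analogue of \cref{lem:variational}, gives
\[
Q_{3/2}\geq\bigl(1-|\hg^-_\ast|/\mu_1^{(4)}[\varphi]\bigr)Q^{(4)}_1 .
\]
On the positive branch the roles are reversed. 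There the $K=3/2$ channel is repulsive, and the attractive $K=1/2$ channel obeys
\[
Q_{1/2}\geq\bigl(1-2\hg^+_\ast/\mu_1^{(4)}[\varphi]\bigr)Q^{(4)}_1 .
\]
The strict-gap hypothesis says the dark endpoint lies strictly below the visible degree-one endpoint, $|\hg^-_\ast|<\mu_1^{(4)}$ or $\hg^+_\ast<\mu_1^{(4)}/2$. Under \cref{thm:dark-4d} this is exactly \eqref{eq:exclude-n1-4d}, since the proof there already excludes $n=1$ strictly on both branches. So $\theta_K>0$ in every case, giving $Q_K\geq\theta_K Q^{(4)}_1$.

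The remaining input is the continuity bound $|\langle\rho,v\rangle|^2\leq C_\rho Q^{(4)}_1[v]$, with $\rho=r^{5/2}\hh$ compactly supported in $[a,b]\subset(0,\infty)$. I expect this to be the one step needing care, because $\mathcal Q^{(4)}_1$ contains non-$L^2$ functions and the spectrum reaches zero. The argument is the analogue of \eqref{eq:eval-bound}. The centrifugal coefficient $15/4>0$ controls $\int_a^b v^2$ through $\int v^2/r^2\leq\frac{4}{15}Q^{(4)}_1$. Together with $\int|v'|^2\leq Q^{(4)}_1$, this bounds the $H^1(a,b)$ norm of $v$ and hence $\sup_{[a,b]}|v|$. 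Cauchy--Schwarz then gives the bound with $C_\rho=\tfrac{4}{15}b^2\|\rho\|_{L^2}^2$.

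Inserting both bounds into the supremum gives $\langle\rho,h_K^{-1}\rho\rangle\leq C_\rho/\theta_K$. Finally I need that the regulated limit equals the energy-space supremum. The supremum over $\mathcal F$ of $2\langle\rho,v\rangle-Q_K[v]-\epsilon\|v\|^2$ increases as $\epsilon\downarrow0$ and is bounded by $C_\rho/\theta_K$ uniformly in $\epsilon$. This yields finiteness of \eqref{eq:horizon-reduced-4d}, even though $\inf\sigma(\mathcal M)=0$ and $\mathcal M$ has an $L^2$ zero mode in the dark degree.
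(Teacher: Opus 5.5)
Your proposal is correct and follows the paper's own argument: reduce to the two degree-one resolvents, get coercivity $Q_K\geq\theta_K Q^{(4)}_1$ from the strict gap in the attractive visible channel (the other channel being repulsive), bound $|\langle\rho,v\rangle|^2$ by the free degree-one energy using compact support of $\rho$, and conclude $\langle\rho,h_K^{-1}\rho\rangle\leq C_\rho/\theta_K$ from the Legendre representation \eqref{eq:legendre}. Your explicit constant $C_\rho=\tfrac{4}{15}b^2\|\rho\|^2$, which follows from Cauchy--Schwarz alone without the sup-norm detour, and your $\epsilon$-uniform bound on the regulated suprema spell out steps that the paper leaves implicit.
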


\begin{proof}
Equation~\eqref{eq:horizon-reduced-4d} contains only degree-one resolvents. Strict separation from the visible attractive threshold gives a coercive estimate $Q_K\geq\theta Q_{3/2}$ with $\theta>0$, while the other visible channel is repulsive. Since $\rho$ is compactly supported, the same evaluation estimate used in \cref{prop:continuum-finite} yields $|\langle\rho,v\rangle|^2\leq C Q_{3/2}[v]$. The Legendre representation~\eqref{eq:legendre} then bounds each visible inverse form by $C/\theta$. The critical zero mode lies in a degree absent from the source trace, so the dark crossing does not produce a divergence in \eqref{eq:horizon-reduced-4d}.
\end{proof}

For the normalized $C^\infty$ bump with $r_0=10$ and $w/r_0=0.04$, direct threshold calculations continue the dark ordering well beyond the sufficient interval in \eqref{eq:width-4d}. The positive branch is minimized at $n=2$, while the negative branch is minimized at $n=7$. Their separation from the visible degree-one threshold is about $24\%$ and $37\%$, respectively, as summarized in \cref{tab:shell-4d}.

\begin{table}[H]
\centering
\small
\renewcommand{\arraystretch}{1.15}
\begin{tabular}{lccccc}
\hline
branch & computed minimum & $|\hg_\ast|$ & visible $n=1$ & relative margin & nearest computed competitor\\
\hline
positive & $n=2$ & $0.158$ & $0.207$ & $24\%$ & $n=3$, $0.172$\\
negative & $n=7$ & $0.2628$ & $0.415$ & $37\%$ & $n=8$, $0.2630$\\
\hline
\end{tabular}
\caption{Finite-width thresholds for the four-dimensional shell with $r_0=10$ and $w/r_0=0.04$. The positive and negative minima occur in the source-dark degrees $n=2$ and $n=7$, well outside the sufficient narrow-shell regime of \eqref{eq:width-4d}.}
\label{tab:shell-4d}
\end{table}

Figure~\ref{fig:four-d-dark} shows the same smooth-shell ordering together with the finite-ball BPST convergence discussed next.

\begin{figure}[H]
\centering
\includegraphics[width=\textwidth]{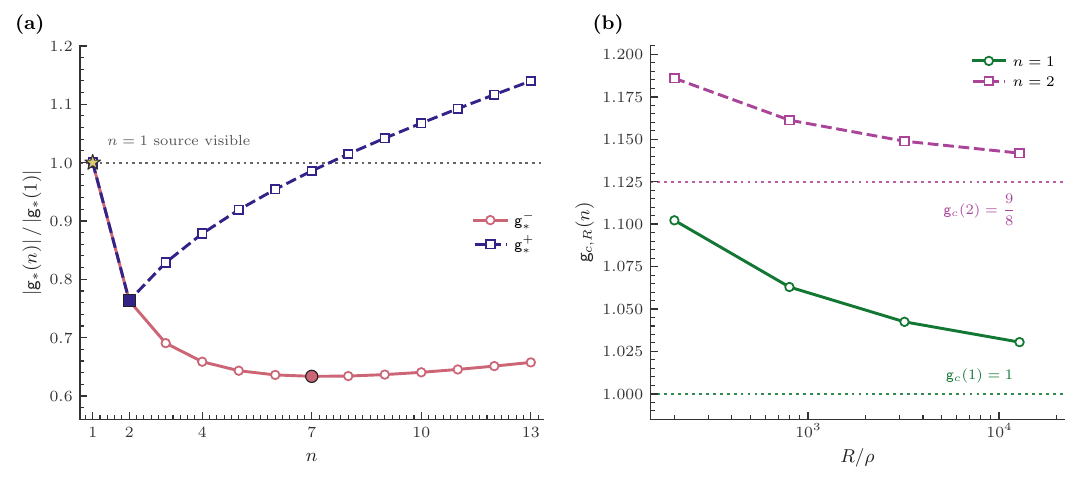}
\caption{Four-dimensional dark ordering and domain check. Panel (a) shows the representative smooth-shell endpoints normalized to the visible $n=1$ value, with dark minima at $n=2$ and $n=7$. Panel (b) shows the finite-ball BPST thresholds approaching the exact half-line values $1$ and $9/8$ from above.}
\label{fig:four-d-dark}
\end{figure}

\section{Positivity thresholds for the regular-gauge BPST background}
\label{sec:bpst}

The regular-gauge BPST background lies at the boundary of the $L^2$ visibility construction. Its inverse-square tail produces generalized zero-energy solutions, while the horizon-source columns are not square integrable on $\R^4$. The appropriate first step is therefore to determine the loss of quadratic-form positivity independently of any full-space horizon trace. In the most attractive positive-branch channel, set
\begin{equation}
a_n=|2c_n|,
\qquad
a_1=2,
\qquad
a_n=n+2\quad(n\geq2).
\end{equation}
After $u=r^{3/2}f$, the radial quadratic form is
\begin{equation}
Q_{n,\hg}^{\rm BPST}[u]
=\int_0^\infty\left[
|u'|^2
+\frac{(n+1)^2-\frac14}{r^2}|u|^2
-\frac{2a_n\hg}{r^2+\rho^2}|u|^2
\right]\d r.
\label{eq:bpst-form}
\end{equation}
Define
\begin{equation}
\delta_n(\hg)=2a_n\hg-(n+1)^2.
\end{equation}
An exact rearrangement gives
\begin{equation}
Q_{n,\hg}^{\rm BPST}[u]
=\mathcal H[u]-\delta_n(\hg)\mathcal W[u]+\mathcal R_{\hg}[u],
\label{eq:bpst-form-decomposition}
\end{equation}
where
\begin{align}
\mathcal H[u]
&=\int_0^\infty\left(|u'|^2-\frac{|u|^2}{4r^2}\right)\d r,\label{eq:hardy-form}\\
\mathcal W[u]
&=\int_0^\infty\frac{|u|^2}{r^2}\,\d r,\\
\mathcal R_{\hg}[u]
&=2a_n\hg\rho^2\int_0^\infty
\frac{|u|^2}{r^2(r^2+\rho^2)}\,\d r.
\end{align}
Hardy's inequality makes $\mathcal H$ positive on the Friedrichs form domain. Recombining the terms gives the useful lower bound
\begin{equation}
Q_{n,\hg}^{\rm BPST}[u]
\geq
\int_0^\infty
\frac{
\bigl[(n+1)^2-2a_n\hg\bigr]r^2
+(n+1)^2\rho^2
}{r^2(r^2+\rho^2)}
|u(r)|^2\,\d r.
\label{eq:hardy-instanton}
\end{equation}

\begin{proposition}[Exact BPST positivity threshold]
\label{prop:bpst-threshold}
The Friedrichs realization of \eqref{eq:bpst-form} is nonnegative if and only if
\begin{equation}
\hg\leq\hg_c(n),
\qquad
\hg_c(1)=1,
\qquad
\hg_c(n)=\frac{(n+1)^2}{2(n+2)}\quad(n\geq2).
\label{eq:bpst-thresholds}
\end{equation}
At $\hg=\hg_c(n)$, the spectral infimum is zero but is not an $L^2$ eigenvalue.
\end{proposition}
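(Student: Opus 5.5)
The plan is to read both directions off the decomposition \eqref{eq:bpst-form-decomposition}. The Hardy lower bound \eqref{eq:hardy-instanton} gives positivity below threshold. A logarithmic dilation sequence that nearly saturates Hardy's inequality gives negativity above threshold. Throughout I would work on $C_c^\infty(0,\infty)$, which is a form core for the Friedrichs realization. Every inequality below is then extended by closure, using that $\mathcal H$, $\mathcal W$ and $\mathcal R_{\hg}$ are closable nonnegative forms on that core.

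\emph{Sufficiency.} For $\hg\leq0$ the instanton term in \eqref{eq:bpst-form} is repulsive. Positivity then follows from Hardy's inequality alone, since $(n+1)^2-\tfrac14\geq-\tfrac14$. For $0<\hg\leq\hg_c(n)$ I would use $\mathcal H\geq0$ and $\mathcal R_{\hg}\geq0$, which gives \eqref{eq:hardy-instanton}. The coefficient $(n+1)^2-2a_n\hg=-\delta_n(\hg)$ is nonnegative exactly when $\hg\leq(n+1)^2/(2a_n)$. With $a_1=2$ and $a_n=n+2$ this bound is $\hg_c(1)=1$ and $\hg_c(n)=(n+1)^2/(2(n+2))$. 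The numerator in \eqref{eq:hardy-instanton} is therefore pointwise nonnegative, and $Q\geq0$.

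\emph{Necessity.} Suppose $\delta_n(\hg)>0$; this forces $\hg>0$. Fix a nonzero $\psi\in C_c^\infty(1,2)$ and set $u_k(r)=r^{1/2}\psi(k^{-1}\log r)$, so that $u_k$ is supported in $[e^k,e^{2k}]$. The substitution $t=k^{-1}\log r$ gives
\[
\mathcal H[u_k]=\frac1k\int|\psi'|^2\,\d t,
\qquad
\mathcal W[u_k]=k\int|\psi|^2\,\d t .
\]
The cross term $\psi\psi'$ integrates to zero. On the support, $\mathcal R_{\hg}[u_k]\leq2a_n\hg\rho^2e^{-2k}\,\mathcal W[u_k]$. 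Hence
\[
Q[u_k]/\mathcal W[u_k]=O(k^{-2})-\delta_n(\hg)+O(e^{-2k}),
\]
which is negative for large $k$. The form is then not nonnegative.

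\emph{Behaviour at threshold.} The potential terms $\bigl[(n+1)^2-\tfrac14\bigr]r^{-2}-2a_n\hg(r^2+\rho^2)^{-1}$ decay at infinity. At $\hg=\hg_c$ they are form-bounded with respect to the Hardy-critical form, so I would argue that the essential spectrum is $[0,\infty)$. One route is Weyl's theorem applied to the difference with the free $\ell=n+\tfrac12$ operator, which is compact in resolvent sense after cutting off near $0$ and $\infty$. Alternatively, the same dilation sequence $u_k$ gives Weyl sequences at energy $0$. Combined with nonnegativity, this yields $\inf\sigma=0$.

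To exclude an $L^2$ eigenvalue at $0$, suppose $u$ lies in the form domain with $Q[u]=0$. At $\delta_n=0$, \eqref{eq:hardy-instanton} reduces to
\[
0=Q[u]\geq(n+1)^2\rho^2\int_0^\infty\frac{|u|^2}{r^2(r^2+\rho^2)}\,\d r .
\]
The weight is strictly positive, so $u=0$ almost everywhere.

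The step I expect to be the main obstacle is justifying \eqref{eq:hardy-instanton} and the identity $Q=\mathcal H-\delta\mathcal W+\mathcal R$ on the full Friedrichs domain rather than on the core. At threshold $\mathcal W$ is not bounded by $Q$, so the closure argument must avoid treating $\mathcal W[u]$ as finite. I would handle this by keeping the combined nonnegative integrand of \eqref{eq:hardy-instanton} together and applying Fatou's lemma along core approximants that converge in form norm and almost everywhere.
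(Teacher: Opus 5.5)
Your proposal is correct and follows essentially the same route as the paper: the Hardy rearrangement \eqref{eq:bpst-form-decomposition} gives sufficiency, a logarithmically dilated witness $r^{1/2}\chi(\cdot)$ pushed into the inverse-square tail gives necessity, and dilation sequences give $\inf\sigma=0$ at threshold. The one small deviation is how you exclude a zero eigenvalue: you use the strictly positive weight $(n+1)^2\rho^2/(r^2(r^2+\rho^2))$ in \eqref{eq:hardy-instanton}, extended to the form domain by Fatou, which is a cleaner justification than the paper's appeal to non-attainment of Hardy equality.
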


\begin{proof}
Because the integral defining $\mathcal R_{\hg}$ is positive, the sign of the remainder is the sign of $\hg$. We therefore separate $\hg\leq0$ from $0\leq\hg\leq\hg_c(n)$ when proving positivity.

If $\hg\leq0$, the original form \eqref{eq:bpst-form} is termwise positive and no Hardy rearrangement is needed. Its centrifugal coefficient satisfies $(n+1)^2-\tfrac14>0$ for every $n\geq1$, and the profile term $-2a_n\hg/(r^2+\rho^2)$ is positive when $\hg\leq0$, so all three integrands are pointwise positive.

If $0\leq\hg\leq\hg_c(n)$, then $\delta_n(\hg)\leq0$ and $\mathcal R_{\hg}\geq0$. The decomposition becomes
\begin{equation}
Q_{n,\hg}^{\rm BPST}[u]
=\mathcal H[u]+|\delta_n(\hg)|\mathcal W[u]+\mathcal R_{\hg}[u]
\end{equation}
and every term is positive, with Hardy's inequality controlling the first. This proves positivity throughout the subcritical and critical range.

For the converse direction, let $\delta=\delta_n(\hg)>0$. Then $\hg>\hg_c(n)>0$, so the BPST remainder is positive but must be shown unable to compensate the negative $-\delta\mathcal W$ term. Choose a nonzero $\chi\in C_c^\infty((-1,1))$ and define
\begin{equation}
A_\chi=\int_{\R}|\chi'(s)|^2\d s,
\qquad
B_\chi=\int_{\R}|\chi(s)|^2\d s.
\end{equation}
For $L,T>0$, let
\begin{equation}
u_{L,T}(r)=r^{1/2}\chi\!\left(\frac{\log r-T}{L}\right).
\label{eq:bpst-witness}
\end{equation}
The test function is smooth and supported in the annulus $e^{T-L}<r<e^{T+L}$, which allows its support to be moved arbitrarily far into the asymptotic inverse-square region by increasing $T$. Under $t=\log r$ and $u(r)=r^{1/2}v(t)$, the critical Hardy form becomes the free one-dimensional form,
\begin{equation}
\mathcal H[u]=\int_{\R}|v'(t)|^2\d t,
\qquad
\mathcal W[u]=\int_{\R}|v(t)|^2\d t.
\end{equation}
Consequently,
\begin{equation}
\mathcal H[u_{L,T}]=\frac{A_\chi}{L},
\qquad
\mathcal W[u_{L,T}]=LB_\chi.
\label{eq:bpst-witness-hardy}
\end{equation}
The positive BPST remainder satisfies
\begin{equation}
\mathcal R_{\hg}[u_{L,T}]
=2a_n\hg\rho^2\int_{\R}
\frac{\chi((t-T)/L)^2}{e^{2t}+\rho^2}\,\d t
\leq2a_n\hg\rho^2e^{-2(T-L)}LB_\chi.
\label{eq:bpst-remainder-bound}
\end{equation}
Choose $L$ so that $A_\chi/L<(\delta/2)LB_\chi$, and then choose $T$ so that the right-hand side of \eqref{eq:bpst-remainder-bound} is smaller than $(\delta/2)LB_\chi$. Equations~\eqref{eq:bpst-form-decomposition} and \eqref{eq:bpst-witness-hardy} then give
\begin{equation}
Q_{n,\hg}^{\rm BPST}[u_{L,T}]<0.
\end{equation}
Hence every $\hg>\hg_c(n)$ admits a compactly supported direction on which the quadratic form is negative. This proves necessity of the threshold bound.

At $\hg=\hg_c(n)$ the $\mathcal W$ coefficient vanishes, leaving $\mathcal H+\mathcal R_{\hg_c}$. Equality in the Hardy inequality is not attained in the Friedrichs form domain, so zero is not an $L^2$ eigenvalue. For a fixed $\phi\in C_c^\infty((1,2))$, the normalized dilates $u_R(r)=R^{-1/2}\phi(r/R)$ satisfy $Q_{n,\hg_c}^{\rm BPST}[u_R]=O(R^{-2})$. Hence the spectral infimum is zero.
\end{proof}

For $n\geq2$, the sequence \eqref{eq:bpst-thresholds} starts at $\hg_c(2)=9/8$ and increases strictly. Thus at the physical value $\hg=1$ all $L^2$ quadratic forms are nonnegative. The degree-one form is Hardy-critical, while every higher degree remains below its instability threshold.

Let $\hg_{c,R}(n)$ be the threshold on $(0,R)$ with a Dirichlet condition at $R$. Enlarging the interval enlarges the variational class, so domain monotonicity makes $\hg_{c,R}(n)$ decrease with $R$ and keeps it above the half-line threshold. Conversely, the compactly supported witness in the proof belongs to every sufficiently large ball for each fixed $\hg>\hg_c(n)$. Therefore
\begin{equation}
\lim_{R\to\infty}\hg_{c,R}(n)=\hg_c(n).
\label{eq:bpst-box-limit}
\end{equation}

The exact regular-instanton solutions discussed by Maas are consistent with this threshold analysis once the operator domain is specified~\cite{Maas2006}. In the dimensionless coordinate $x=r/\rho$, two nontrivial radial solutions in our degree convention are
\begin{align}
f_1(x)
&=
\frac{2\bigl[-x^2+(1+x^2)\log(1+x^2)\bigr]}{x^3},
\label{eq:maas-f1}\\
f_2(x)
&=
\frac{3\bigl[x^4+2x^2-2(1+x^2)\log(1+x^2)\bigr]}{x^4}.
\label{eq:maas-f2}
\end{align}
They satisfy
\begin{align}
f_1''+\frac3x f_1'-\frac3{x^2}f_1+\frac4{1+x^2}f_1&=0,
\label{eq:maas-residual-1}\\
f_2''+\frac3x f_2'-\frac8{x^2}f_2+\frac8{1+x^2}f_2&=0.
\label{eq:maas-residual-2}
\end{align}
Near the origin, $f_1\sim x$ and $f_2\sim x^2$. At infinity,
\begin{equation}
f_1(x)\sim\frac{4\log x}{x},
\qquad
f_2(x)\longrightarrow3.
\label{eq:maas-asymptotics}
\end{equation}
Both solutions remain bounded, but neither is square integrable with the four-dimensional radial measure $x^3\d x$. The first belongs to the source-visible degree $n=1$, while the second lies in the source-dark degree $n=2$. Since neither is an isolated $L^2$ eigenvector, \cref{prop:visibility} assigns no pole residue to them.

The horizon source itself also fails the full-space admissibility condition. Up to a constant, its radial amplitude is $x/(1+x^2)$, so its $L^2$ density is
\begin{equation}
x^3\left(\frac{x}{1+x^2}\right)^2
=\frac{x^5}{(1+x^2)^2}\sim x.
\label{eq:bpst-source-nonl2}
\end{equation}
The density is not integrable at infinity, so the source columns do not belong to $L^2(\R^4)$. Consequently, $T_A$ is not a bounded map into the standard ghost Hilbert space and the full-space trace $\Tr(\mathcal M^{-1}V_A)$ is outside the framework used for compact profiles. A horizon trace for the regular-gauge BPST field requires an additional finite-volume, weighted-space, or density prescription.

Finite volume and full space therefore assign different spectral status to the physical BPST amplitude. On Dirichlet balls, the degree-one and degree-two thresholds decrease toward the half-line values $1$ and $9/8$, so the configuration with $\hg=1$ remains inside every finite-ball region and reaches the Hardy threshold only as $R\to\infty$. Maas's broader boundary convention uses bounded generalized solutions and places the instanton on the first Gribov horizon and on the common boundary with the fundamental modular region~\cite{Maas2006}. The two statements refer to different operator domains. In the Friedrichs quadratic-form setting used here, instability for $\hg>1$ begins in the source-visible degree-one channel.

\section{Conclusion}
\label{sec:follows}

The first loss of Faddeev-Popov positivity and the singular response of Zwanziger's horizon function are distinct spectral questions. The eigenvalues of $\mathcal M[A]$ determine when a channel becomes critical, while the compressed operator $P_cV_AP_c$, with $V_A=T_AT_A^*$, determines whether the horizon source probes that critical subspace. Its rank therefore provides a degeneracy-safe classification of dark, partially visible, and fully visible crossings. In particular, a zero mode can occur at the first loss of positivity without producing a pole in the source-sandwiched inverse when the critical direction is annihilated by the source.

The hedgehog families make this separation explicit. In three dimensions, the source is confined to $L=1$, whereas the radial profile can place the first negative-branch threshold in a channel with $L\geq2$. In four dimensions, the source occupies only hyperspherical degree $n=1$, and the exceptional degree-one angular spectrum permits dark first crossings on both amplitude branches. The finite-width estimates establish smooth profiles with these orderings, while the threshold calculations show that the same mechanism persists far beyond the narrow ranges supplied by the sufficient bounds. The regular-gauge BPST field exposes a different limitation because, at its physical amplitude, the Friedrichs form is Hardy-critical in degree one, but neither the bounded zero-energy solutions nor the horizon-source columns belong to the full-space $L^2$ setting used for compact profiles. A full-space horizon trace for that background consequently requires an additional domain or regularization prescription.
 
These conclusions are configurationwise and do not modify the ensemble horizon condition, the no-pole formulation, or the GZ/RGZ measure. They instead identify spectral proximity to the Gribov boundary and source visibility as quantities that can be examined separately. In the symmetric families considered here, angular selection can suppress the coupling of the horizon source to the critical sector exactly. How this suppression is modified by perturbations that break the underlying symmetry remains an open question.

\subsection*{Declaration on the use of generative AI}
Generative AI tools were used for language editing, algebraic cross-checks, and scripting assistance. All analytical claims, symbolic calculations, and numerical outputs were reviewed by the author, who assumes full responsibility for the manuscript.

\end{document}